\theoremstyle{plain}
\newtheorem{theorem}{Theorem}[section]
\newtheorem{lemma}[theorem]{Lemma}
\newtheorem{corollary}[theorem]{Corollary}
\newtheorem{proposition}[theorem]{Proposition}
\theoremstyle{definition}
\theoremstyle{remark}
\newtheorem{remark}{Remark}
\begin{document}

\articletype{ \ \ }

\title{Objective Bayesian inference for the Dhillon distribution}

\author{ Pedro Luiz Ramos$^{\rm 1}$$^{\ast}$\thanks{$^\ast$Corresponding author. Email: pedro.ramos@uc.cl
\vspace{6pt}}, Enrique Achire Quispe$^{\rm 1}$, Ricardo Puziol de Oliveira$^{\rm 2}$ \\ and Jorge A. Achcar$^{\rm 3}$ \\\vspace{6pt} 
\normalsize{$^{1}$Facultad de Matemáticas, Pontificia Universidad Católica de Chile, \\ Santiago, Chile} \\
\normalsize{$^{2}$Department of Statistics, State University of Sao Paulo, Presidente \\ Prudente, Brazil} \\
\normalsize{$^{3}$Ribeirao Preto Medical School, University of Sao Paulo, Ribeirao \\ Preto, Sao Paulo, Brazil} \\
}

\maketitle

\begin{abstract}
In this work, we develop an objective Bayesian framework for the Dhillon probability distribution. We explicitly derive three objective priors: the Jeffreys prior, the overall reference prior, and the maximal data information prior. We show that both Jeffreys and reference priors yields a proper posterior distribution, whereas the maximal data information prior leads to an improper posterior. Moreover, we establish sufficient conditions for the existence of its respective posterior moments. Bayesian inference is carried out via Markov chain Monte Carlo, using the Metropolis–Hastings algorithm. A comprehensive simulation study compares the Bayesian estimators to maximum likelihood estimators in terms of bias, mean squared error, and coverage probability. Finally, a real-data application illustrates the practical utility of the proposed Bayesian approach.
\end{abstract}

\begin{keywords}
Dhillon distribution; objective priors; reference priors; matching priors.
\end{keywords}

\section{Introduction}

Ensuring the reliability of electronic systems, such as semiconductor devices, power modules, and control circuits, is essential to prevent unexpected failures, extend service life, and reduce maintenance costs. Classical lifetime laws like the exponential, Weibull, and Gamma models impose strictly monotonic hazard rates, which are unable to capture both an initial burn‐in period and a subsequent wear‐out phase. To address this limitation, Dhillon \cite{dhillon1980statistical} introduced a two‐parameter hazard‐rate model whose instantaneous failure rate is given by
\[
h(t)=\frac{\beta\,\theta\,t^{\beta-1}}{1+\theta\,t^\beta}.
\]
 where it is strictly decreasing for \(0<\beta\le 1\) and unimodal (increases to a single peak and then decreases) for \(\beta>1\), with a unique maximum. Hence, the model accommodates components with declining early-life risk as well as systems whose risk rises to a single peak before decreasing---i.e., unimodal wear-out behavior. Dhillon’s motivation was to better describe infant‐mortality failures and aging effects in microelectronic components, and this flexible yet tractable hazard formulation has since become a cornerstone of modern reliability engineering. Some extensions for the Dhillon distribution have been considered in recent years. Abba et al. \cite{abba2025bi} introduced a four-parameter Bi-Failure Modes model in which the distribution of failure times due to one failure type follows a Dhillon law, while that due to another cause follows an exponential-power model. Abba et al. \cite{abba2025bayesian} used the same approach to construct a Weibull–Dhillon competing risk model, with further robustifications in \cite{abba2024robust}.

An important drawback of Dhillon’s original presentation is that he neither derives the full log‐likelihood function nor computes the score equations and Fisher information matrix for the parameters vector, which is essential for constructing asymptotic confidence intervals. Consequently, maximum‐likelihood estimators remain unspecified in specified form, and their asymptotic variances cannot be obtained via the standard expected expressions. Moreover, the model has not received a dedicated Bayesian treatment: no posterior distribution has been formulated, nor has the question of posterior propriety been addressed. This gap leaves both frequentist and Bayesian properties of the estimators unexplored.

Under the Bayesian approach, objective priors are often chosen when informative priors are unavailable, so the data drive the posterior distribution, but their impropriety can prevent valid inference by producing improper posteriors. Common noninformative priors include Jeffreys’ prior and reference priors \cite{jeffreys1946invariant, berger1992reference}. Northrop and Attalides \cite{northrop2016} point out that simple rules for ensuring posterior propriety are lacking in this setting and providing such conditions are of main concern \cite{ramos2023power}. Ramos et al. \cite{ramos2020posterior} applied similar criteria to survival models, illustrating posterior propriety in complex settings. These results are extended to provide explicit conditions for the existence of posterior moments, including means and variances, ensuring their mathematical validity \cite{ ramos2019improved, ramos2020posterior, ramos2021posterior, ramos2024objective, ramos2025posterior, achire2025posterior}.

We consider several widely used objective priors, such as the Jeffreys' prior \cite{jeffreys1946invariant}, the reference prior of Berger and Bernardo \cite{berger1992reference, bernardo2005}, and Zellner’s maximal data information prior (MDIP) \cite{zellner1, zellner2}. It is proven that each prior yields a proper posterior with finite moments, guaranteeing valid posterior inference. Bayesian computations are carried out via Markov chain Monte Carlo (MCMC) using the Metropolis--Hastings algorithm to sample from the posterior and compute relevant marginal statistics. Comparative analyses against maximum likelihood estimators demonstrate improvements in bias, mean squared error, and nominal coverage. To showcase the Dhillon distribution’s practical value, we analyze two real‐world failure‐time datasets drawn from sugarcane‐harvester machinery operating under severe loads and highly abrasive conditions. The first dataset records 66 times (in days) between corrective interventions of a diesel engine, and the second comprises 90 time‐to‐failure observations for a line‐divider mechanism. These components exhibit complex wear patterns that evolve with usage, making accurate lifetime modeling essential for forecasting the next breakdown.

Overall, despite the recognized flexibility of the Dhillon distribution in modeling non-monotonic hazard rates, 
there is no published work providing a complete objective Bayesian treatment for this model, including proofs of posterior propriety, 
closed-form derivation of reference priors, and a systematic comparison with classical estimators in simulated and real-world settings. 
This article addresses this gap by developing and validating such a framework, combining theoretical results, 
extensive simulation studies, and applied analysis of maintenance data from agricultural machinery.

The remainder of the paper is organized as follows. Section~2 reviews the Dhillon distribution, its survival and hazard properties, and related functions. Section~3 introduces the maximum-likelihood estimation framework. Section~4 develops the objective Bayesian approach: after presenting the general posterior setup, we derive and analyze Jeffreys, reference and MDIP, and describe the Metropolis–Hastings sampler used for posterior computation. Section~5 reports a comprehensive simulation study; Section~6 describes the posterior predictive distribution for future lifetimes. Section~7 illustrates the methodology on real failure-time data from harvester components, and Section~8 concludes with summary and future directions.

\section{Background and properties}

Let \(T\) denote the lifetime of a component following a Dhillon distribution with shape parameter \(\beta>0\) and scale parameter \(\theta>0\), denoted as \(\mathrm{Dhillon}(\beta,\theta)\). The survival  function is given by:
\begin{equation}\label{survival}
    R(t) = \frac{1}{1 + \theta\,t^\beta}.
\end{equation}

Its probability density function obtained from the formula $f(t)=h(t)R(t)$, is given by:
\begin{equation}\label{eq-0}
f(t;\beta,\theta)
=\frac{\beta\,\theta\,t^{\beta-1}}{\bigl(1+\theta\,t^\beta\bigr)^{2}}, 
\quad t>0.
\end{equation}

The \(r\)-th moment exists if and only if \(0 < r < \beta\) and can be expressed in closed form (see Appendix~\ref{apendix-A}) as
\[
E\bigl[T^r\bigr]
=\frac{r\pi}{\beta\,\sin\!\bigl(\pi\,r/\beta\bigr)}\;\theta^{-\,r/\beta}\,.
\]

In particular, the first two moments—mean and variance—follow as
\[
E[T]
=\frac{\pi}{\beta\,\sin\!\bigl(\pi/\beta\bigr)}\;\theta^{-\,1/\beta},
\qquad
\mathrm{Var}(T)
=\theta^{-\,2/\beta}\biggl\{\frac{2\pi}{\beta\,\sin(2\pi/\beta)}
-\frac{\pi^2}{\beta^2\,\sin^2(\pi/\beta)}\biggr\}.
\]

The behavior of the hazard function \(h(t)\) as \(t\to0\) and \(t\to\infty\) is
\[
h(0)=
\begin{cases}
\infty, & 0<\beta<1,\\
0,      & \beta>1,
\end{cases}
\qquad
h(\infty)=0,
\quad \beta>0.
\]

\begin{theorem}\label{glaser_dhillon}
For \(\beta>0\) and \(\theta>0\), the hazard function of the \(\mathrm{Dhillon}(\beta,\theta)\) distribution satisfies:
\[
h(t)\;
\begin{cases}
\text{is strictly decreasing on }(0,\infty), 
&0<\beta\le1,\\
\text{is unimodal with a unique maximum at }
t^*=\bigl(\tfrac{\beta-1}{\theta}\bigr)^{1/\beta},
&\beta>1.
\end{cases}
\]
\end{theorem}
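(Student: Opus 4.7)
The plan is to analyze the sign of $h'(t)$ directly by computing the derivative from the quotient rule, since the structure of $h$ is simple enough that Glaser's eta-function machinery is unnecessary. Writing $h(t)=u(t)/v(t)$ with $u(t)=\beta\theta t^{\beta-1}$ and $v(t)=1+\theta t^\beta$, I would differentiate and factor out the common positive term $\beta\theta t^{\beta-2}/(1+\theta t^\beta)^2$ so that
\[
h'(t)=\frac{\beta\theta\,t^{\beta-2}}{(1+\theta t^\beta)^2}\bigl[(\beta-1)-\theta t^\beta\bigr].
\]
Since the prefactor is strictly positive for every $t>0$, the sign of $h'(t)$ is governed entirely by the bracket $g(t):=(\beta-1)-\theta t^\beta$.

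Next, I would split into the two regimes stated in the theorem. For $0<\beta\le 1$ one has $\beta-1\le 0$ and $-\theta t^\beta<0$ for all $t>0$, so $g(t)<0$ and therefore $h'(t)<0$ throughout $(0,\infty)$, giving strict monotone decrease. For $\beta>1$, the function $g$ is strictly decreasing in $t$ with $g(0^+)=\beta-1>0$ and $g(\infty)=-\infty$, so it has a unique root $t^{*}=((\beta-1)/\theta)^{1/\beta}$; the sign analysis then shows $h'(t)>0$ on $(0,t^{*})$ and $h'(t)<0$ on $(t^{*},\infty)$, so $h$ is unimodal with maximum at $t^{*}$.

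The only bookkeeping worth double-checking is the boundary case $\beta=1$: here the formula for $h'(t)$ collapses to $-\theta^{2}/(1+\theta t)^{2}<0$, matching the strictly decreasing conclusion and confirming that the case split is consistent at the transition point. Because every algebraic step is elementary, there is no substantial obstacle; the only mild subtlety is ensuring that the extracted prefactor is strictly positive (which requires $t>0$, already in the domain) so that the sign conclusions from $g$ transfer cleanly to $h'$.
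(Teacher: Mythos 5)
Your proof is correct and is essentially the same argument as the paper's: both reduce the monotonicity question to the sign of $(\beta-1)-\theta t^\beta$, the paper doing so via the quantity $-h'(t)/h(t)^2=\bigl(\theta t^\beta-(\beta-1)\bigr)/(\beta\theta t^\beta)$ while you extract the positive prefactor from $h'(t)$ directly, which is only a cosmetic difference since $h>0$. Your factored expression for $h'(t)$ checks out, so no further changes are needed.
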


\begin{proof}
From the Dhillon model, one finds
\[
-\frac{h'(t)}{(h(t))^2}
=\frac{\theta\,t^\beta - (\beta-1)}{\beta\,\theta\,t^\beta}.
\]
First, assuming that \(0<\beta\le1\).  Then \(\beta-1\le0\), so \(\theta\,t^\beta - (\beta-1)>0\) for all \(t>0\), hence \(h'(t)<0\) and \(h(t)\) is strictly decreasing.

In the case where \(\beta>1\), then \(\theta\,t^\beta - (\beta-1)\) changes sign once at 
\[
t^*=\bigl(\tfrac{\beta-1}{\theta}\bigr)^{1/\beta},
\]
so \(h'(t)>0\) on \((0,t^*)\) and \(h'(t)<0\) on \((t^*,\infty)\).  Thus, \(h(t)\) is unimodal with its unique maximum at \(t^*\).
\end{proof}

Figure \ref{denshazard} illustrates both the strictly decreasing hazard (when \(0<\beta\le1\)) and the unimodal hazard (when \(\beta>1\)) of the Dhillon distribution for various \(\beta\) and \(\theta\).

\begin{figure}[!h]
\centering
\includegraphics[scale=0.7]{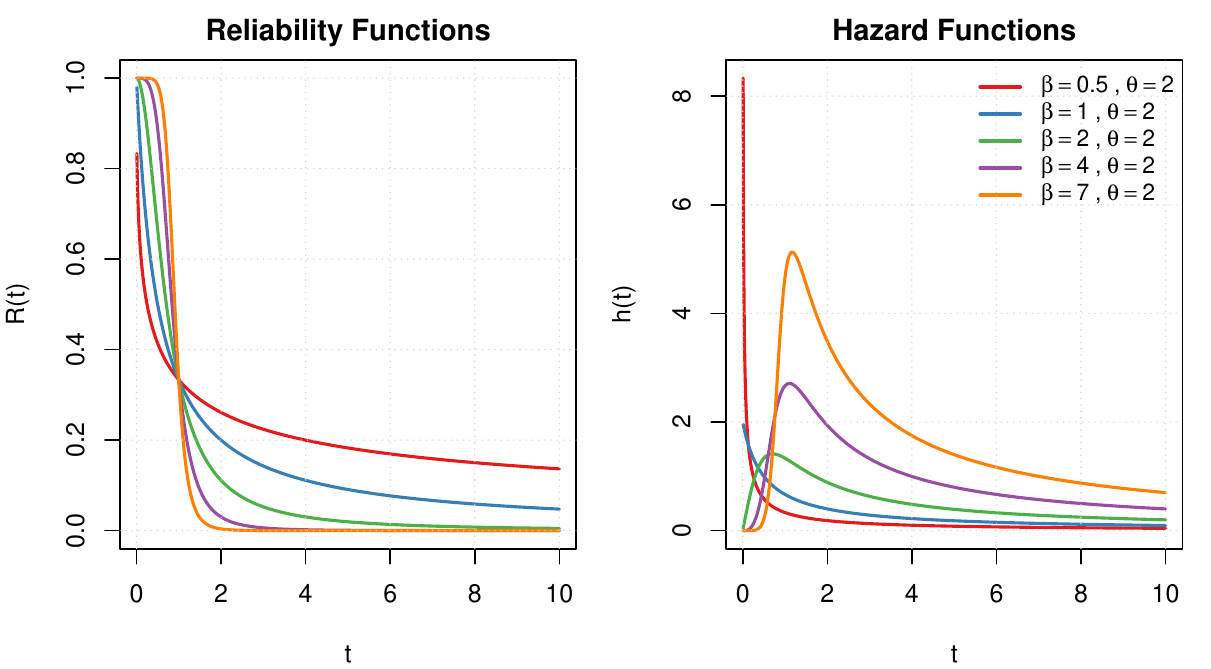}
\caption{Survival and hazard functions of the Dhillon distribution for different \(\beta\).}\label{denshazard}
\end{figure}

Examining Figure \ref{denshazard}, the Dhillon distribution’s hazard rate rises from zero, attains its maximum at \(t^*\), and then falls back toward zero as \(t\to\infty\).  Likewise, the PDF displays a sharp mode at \(t^*\) followed by a heavy right‐hand tail.  This combination of an early surge in failures and occasional late‐time events makes the Dhillon model especially suitable for systems—such as high‐stress repaired devices—that suffer a burst of failures shortly after restart but still exhibit sporadic failures thereafter.  We illustrate this behavior with three repaired‐equipment datasets.  

Another important characteristic is the mean residual life (MRL) function, which measures the expected additional lifetime given survival up to time \(t\).

\begin{proposition}
The mean residual life function of the $\mathrm{Dhillon}(\beta,\theta)$ distribution exists if and only if $\beta > 1$, and is given by  
\[
r(t)
= \frac{\theta^{-1/\beta} \,\bigl(1+\theta\,t^\beta\bigr)}{\beta}
\,B\left(\frac{1}{1+\theta t^\beta};\, 1-\frac{1}{\beta},\,\frac{1}{\beta}\right),
\]
where $B(z; a, b) = \int_{0}^{z} u^{\,a-1} (1-u)^{\,b-1} \, du$ denotes the incomplete beta function.  

The corresponding asymptotic limits are  
\[
r(0)
= \theta^{-1/\beta} \,\frac{\pi}{\beta\,\sin\!\bigl(\pi/\beta\bigr)},
\qquad
r(\infty) = \infty.
\]
\end{proposition}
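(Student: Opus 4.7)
The plan is to start from the standard identity
$$r(t) = \frac{1}{R(t)} \int_t^\infty R(u)\,du,$$
valid whenever $E[T]$ is finite, and reduce the tail integral to an incomplete beta by a single change of variables. Since $R(u) = (1+\theta u^\beta)^{-1} \sim \theta^{-1}u^{-\beta}$ at infinity, the integral converges if and only if $\beta > 1$, which is exactly the condition for $E[T]$ to exist per the moment formula displayed earlier.

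Next I would apply the substitution $s = (1+\theta u^\beta)^{-1}$, which maps $[t,\infty)$ bijectively onto $(0, z]$ where $z = (1+\theta t^\beta)^{-1}$. From $\theta u^\beta = (1-s)/s$ one gets $u^{\beta-1} = \theta^{1/\beta - 1} s^{1/\beta - 1}(1-s)^{1-1/\beta}$ and, after implicit differentiation, $du = -\tfrac{\theta^{-1/\beta}}{\beta}\, s^{-1-1/\beta}(1-s)^{1/\beta -1}\, ds$. Combining this with $R(u) = s$ and reversing the limits removes the minus sign, yielding
$$\int_t^\infty \frac{du}{1+\theta u^\beta} = \frac{\theta^{-1/\beta}}{\beta}\int_0^{z} s^{-1/\beta}(1-s)^{1/\beta - 1}\,ds = \frac{\theta^{-1/\beta}}{\beta}\, B\!\left(z;\, 1-\tfrac{1}{\beta},\, \tfrac{1}{\beta}\right).$$
Multiplying by $1/R(t) = 1 + \theta t^\beta$ gives the stated formula. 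Notice also that the convergence of the inner integral in the incomplete beta at $s=0$ requires $-1/\beta > -1$, i.e.\ $\beta > 1$, recovering the same compatibility condition.

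For the boundary values, at $t=0$ we have $z=1$ and the incomplete beta collapses to the complete beta, which by Euler's reflection formula equals $\Gamma(1-1/\beta)\Gamma(1/\beta) = \pi/\sin(\pi/\beta)$; substituting yields the claimed expression for $r(0)$. For $t\to\infty$, $z = (1+\theta t^\beta)^{-1} \to 0$, so the leading-order expansion $B(z; 1-1/\beta, 1/\beta) \sim z^{1-1/\beta}/(1-1/\beta)$ combined with the prefactor $(1+\theta t^\beta) = z^{-1}$ yields $r(t) \sim t/(\beta-1)$, which diverges since $\beta>1$.

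The only genuinely delicate step is the bookkeeping in the change of variables---matching the exponents to the beta integrand and keeping track of the orientation reversal when $u \uparrow \infty$ corresponds to $s \downarrow 0$---but no analytic obstacles appear beyond this routine calculation, since the convergence condition at $s=0$ inside the beta and at $u=\infty$ inside the tail integral coincide exactly with $\beta > 1$.
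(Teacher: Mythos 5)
Your proof is correct and follows essentially the same route as the paper: your single substitution $s=(1+\theta u^\beta)^{-1}$ is exactly the composition of the paper's two successive changes of variable ($x=\theta s^\beta$ followed by $u=1/(1+x)$), and the resulting incomplete-beta identity is identical. You additionally verify the convergence condition $\beta>1$ and the limits $r(0)$ and $r(\infty)$ explicitly (via Euler's reflection formula and the small-$z$ expansion of the incomplete beta), which the paper's appendix leaves implicit, so no gaps remain.
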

\begin{proof}
    The proof is provided in Appendix~\ref{apendix-A}.
\end{proof}
The following lemma, adapted from \cite{tang2002mean}, will play a key role in the direct proof of Theorem~\ref{theorem-mlr}.

\begin{lemma}\label{brison_dhillon}  
If a lifetime \(T\) has hazard \(h(t)\) unimodal and \(h(0)\,r(0)<1\), then its MRL \(r(t)\) has a bathtub shape (strictly decreasing up to some \(t_0\), then strictly increasing).
\end{lemma}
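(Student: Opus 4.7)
The plan is to exploit the fundamental identity
\[
r'(t) = h(t)\, r(t) - 1,
\]
which holds for every absolutely continuous lifetime with finite mean. Setting $\eta(t) := r'(t) = h(t)\, r(t) - 1$, the bathtub claim for $r$ becomes equivalent to showing that $\eta$ has exactly one sign change on $(0,\infty)$, from negative to positive. The hypothesis $h(0)\, r(0) < 1$ immediately yields $\eta(0) < 0$, so $r$ is strictly decreasing in a right neighborhood of $0$.

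Next I would differentiate $\eta$ to obtain
\[
\eta'(t) = h'(t)\, r(t) + h(t)\, \eta(t).
\]
The crucial observation is that at any root $t_0$ of $\eta$ the equation collapses to $\eta'(t_0) = h'(t_0)\, r(t_0)$, whose sign coincides with that of $h'(t_0)$ since $r>0$. Letting $t^*$ denote the unique mode of $h$, so that $h'>0$ on $(0,t^*)$ and $h'<0$ on $(t^*,\infty)$, this produces a structural dichotomy: any critical point of $r$ in $(0,t^*)$ is a strict local minimum, whereas any critical point in $(t^*,\infty)$ is a strict local maximum.

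Combining these ingredients with $\eta(0)<0$, the first critical point of $r$ (should it exist) must be a local minimum and therefore lies in $(0,t^*]$; denote it $t_0$. Beyond $t_0$ we have $\eta>0$ and $r$ is increasing. A hypothetical subsequent zero of $\eta$ would have to lie in $(t^*,\infty)$ and be a local maximum; after it, $r$ would start decreasing, and by the same dichotomy this would require yet another local minimum in $(0,t^*)$, which is impossible. Hence $\eta$ has exactly one sign change and $r$ has the claimed bathtub shape with turning point $t_0$.

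The hard part will be establishing the very existence of the crossing $t_0$: a priori $\eta$ could remain negative everywhere, in which case $r$ would be strictly decreasing with no bathtub at all. Ruling this out requires a tail argument, and in the Dhillon setting it follows painlessly from $r(\infty)=\infty$, established in the preceding proposition, together with $\eta(0)<0$, since a positive strictly decreasing function cannot tend to infinity. More generally, rewriting the identity as the linear ODE $\eta'(t) - h(t)\,\eta(t) = h'(t)\, r(t)$ and integrating against the factor $\exp(-H(t))$, with $H(t)=\int_0^t h(s)\,ds$, gives an explicit representation of $\eta$ from which the single sign change can be read directly from the sign structure of $h'$.
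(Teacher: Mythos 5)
The paper does not actually prove this lemma: it is imported verbatim (``adapted from'') Tang, Lu and Chew \cite{tang2002mean}, so there is no in-paper argument to compare against. Your strategy---work with $\eta(t)=r'(t)=h(t)r(t)-1$, note $\eta(0)<0$, and read off the nature of each critical point from $\eta'(t_0)=h'(t_0)r(t_0)$---is the standard route in that literature and the skeleton is sound: the first zero of $\eta$, if it exists, must be an up-crossing located in $(0,t^*]$, and after any down-crossing in $(t^*,\infty)$ the function $\eta$ can never return to zero.

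There is, however, a genuine gap in how you exclude the unwanted shapes. First, the sentence ruling out a second sign change (``this would require yet another local minimum in $(0,t^*)$, which is impossible'') is a non sequitur as written: after a down-crossing $t_1>t^*$ the function $\eta$ simply stays negative and $r$ decreases forever, which requires no further critical point, so the decreasing--increasing--decreasing shape is not eliminated by the dichotomy alone. Second, you correctly flag that the existence of the first crossing is the hard part, but you only resolve it (and only for the first failure mode) by appealing to $r(\infty)=\infty$. Both gaps close at once with one additional observation: since $h$ is decreasing on $(t^*,\infty)$, the representation $r(t)=\int_0^\infty \exp\bigl(-\int_t^{t+u}h(s)\,ds\bigr)\,du$ shows $r$ is non-decreasing on $(t^*,\infty)$ (the local ``DFR implies IMRL'' argument); this forbids $r$ from being eventually decreasing, hence forbids both ``decreasing throughout'' and ``decreasing--increasing--decreasing'' without any appeal to $r(\infty)=\infty$, and it works for general unimodal hazards, not just the Dhillon case where the tail happens to blow up. Alternatively, in the Dhillon application you may invoke $r(\infty)=\infty$ against \emph{both} scenarios, but you must say so explicitly for the second one. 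A last small point: at a critical point exactly at $t^*$ you get $\eta'(t^*)=0$ and the first/second-derivative test is inconclusive, so the strictness of the sign change there needs a one-line extra remark (e.g., $\eta$ is strictly increasing on a neighbourhood where $h'>0$ and $\eta\le 0$).
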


\begin{theorem}\label{theorem-mlr}
Let \(\theta>0\) and \(\beta>1\).  Then the mean residual life function \(r(t)\) of \(\mathrm{Dhillon}(\beta,\theta)\)  has a bathtub shape (decreases then increases). 
\end{theorem}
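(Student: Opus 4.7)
The plan is a direct application of Lemma~\ref{brison_dhillon}, since the hypothesis has already been essentially assembled earlier in the section. I would verify the two conditions of the lemma in turn for $\mathrm{Dhillon}(\beta,\theta)$ with $\beta>1$.

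First, unimodality of the hazard: this is exactly the content of the second case of Theorem~\ref{glaser_dhillon}, which gives a unique maximum at $t^{*}=((\beta-1)/\theta)^{1/\beta}$. No further work is needed here. Second, the inequality $h(0)\,r(0)<1$: from the asymptotic table stated just before Theorem~\ref{glaser_dhillon}, $h(0)=0$ whenever $\beta>1$, while the previous proposition supplies $r(0)=\theta^{-1/\beta}\pi/\bigl(\beta\sin(\pi/\beta)\bigr)$, which is finite and strictly positive because $\beta>1$ forces $0<\pi/\beta<\pi$ and hence $\sin(\pi/\beta)>0$. The product is therefore $0<1$, and the condition holds trivially.

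With both hypotheses verified, Lemma~\ref{brison_dhillon} yields the existence of some $t_0>0$ such that $r(t)$ is strictly decreasing on $(0,t_0)$ and strictly increasing on $(t_0,\infty)$, which is the claimed bathtub shape.

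The only potentially subtle point—and the place I would devote the most care in the writeup—is making sure that the bathtub is nondegenerate rather than, say, monotone. The decreasing phase is guaranteed near the origin because $r(0)$ is finite while $r(\infty)=\infty$ from the proposition, so $r$ cannot be globally monotone decreasing; together with $h(0)\,r(0)<1$ this rules out the degenerate case in Lemma~\ref{brison_dhillon} and secures the genuine bathtub form. Beyond that, there is no real obstacle: the preparatory results on $h$, $h(0)$, $r(0)$, and $r(\infty)$ do all the heavy lifting.
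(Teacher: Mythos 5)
Your proposal is correct and follows essentially the same route as the paper: both apply Lemma~\ref{brison_dhillon} after citing the unimodality of $h$ from Theorem~\ref{glaser_dhillon} and checking $h(0)\,r(0)<1$, which you make explicit via $h(0)=0$ for $\beta>1$ and the finiteness of $r(0)$. Your closing worry about a degenerate (monotone) case is unnecessary, since the lemma as stated already delivers the strict bathtub shape once its two hypotheses hold, but it does no harm.
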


\begin{proof}
For \(\beta>1\), we have \(h(t)\) unimodal (Theorem \ref{glaser_dhillon}) and one checks \(h(0)\,r(0)<1\).  By Lemma \ref{brison_dhillon}, \(r(t)\) first decreases to a unique minimum \(t_0\) and then increases to \(+\infty\).
\end{proof}

Figure \ref{dhillon_mrl} illustrates  the bathtub‐shaped MRL for \(\beta>1\).

\begin{figure}[!h]
\centering
\includegraphics[scale=0.7]{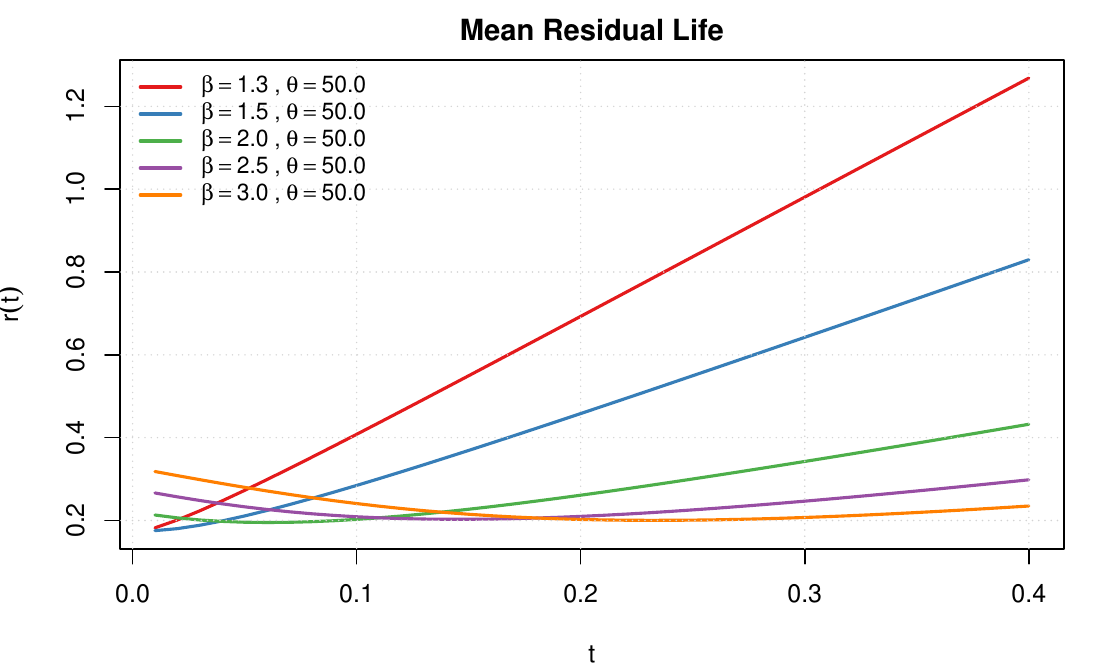}
\caption{Mean residual functions of the Dhillon distribution for different \(\beta\).}\label{dhillon_mrl}
\end{figure}

\section{Maximum likelihood estimator}

Assume now that we observe a complete random sample \(T_1,\dots,T_n\) drawn from the \(\mathrm{Dhillon}(\beta,\theta)\) distribution.  The likelihood function for the parameters \(\beta\) and \(\theta\), given the observed data \(\boldsymbol{t}=(t_1,\dots,t_n)\), is
\begin{equation}\label{eq-8}
L(\beta,\theta\mid\boldsymbol{t})
=\beta^n\,\theta^n\prod_{i=1}^n
\,t_i^{\beta-1}\bigl(1+\theta\,t_i^\beta\bigr)^{-2}.
\end{equation}
The corresponding log-likelihood is
\[
\mathcal{L}(\beta,\theta\mid\boldsymbol{t})
=n\ln\beta + n\ln\theta + (\beta-1)\sum_{i=1}^n\ln t_i
-2\sum_{i=1}^n\ln\bigl(1+\theta\,t_i^\beta\bigr).
\]

The maximum‐likelihood estimators \(\widehat\beta\) and \(\widehat\theta\) are found by solving the score equations obtained from the first derivatives of \(\mathcal{L}\):
\begin{align}
\frac{\partial\mathcal{L}}{\partial\beta}
&= \frac{n}{\beta}
+ \sum_{i=1}^n\ln t_i
-2\sum_{i=1}^n
\frac{\theta\,t_i^\beta\,\ln t_i}{1+\theta\,t_i^\beta}
\;=\;0,
\label{eq-dh-score1}\\
\frac{\partial\mathcal{L}}{\partial\theta}
&= \frac{n}{\theta}
-2\sum_{i=1}^n
\frac{t_i^\beta}{1+\theta\,t_i^\beta}
\;=\;0.
\label{eq-dh-score2}
\end{align}
These equations do not admit closed‐form solutions, so one must employ numerical optimization routines—such as the Newton–Raphson algorithm or quasi‐Newton methods—to obtain \(\widehat\beta\) and \(\widehat\theta\). Under standard regularity conditions, the expected Fisher information matrix for \((\beta,\theta)\) can be obtained in closed form. The details of its derivation are provided in Appendix~\ref{apendix-A}. We have
\begin{equation}\label{eq-dh-fisher}
I(\beta,\theta)
= n
\begin{pmatrix}
\dfrac{\pi^2+3+3\log^2(\theta)}{9\,\beta^2} & -\dfrac{\log \theta}{3\theta\beta} \\[8pt]
-\dfrac{\log \theta}{3\theta\beta} & \dfrac{1}{3\,\theta^2}
\end{pmatrix}.
\end{equation}

Hence the marginal asymptotic distributions are
\[
\widehat\beta\;\overset{\mathrm{approx}}{\sim}\;
N\!\left(\beta,\;
\frac{9\,\beta^2}{n(\pi^2+3)}\right),
\qquad
\widehat\theta\;\overset{\mathrm{approx}}{\sim}\;
N\!\left(\theta,\;
\frac{3\,\theta^2(\pi^2+3+3\log^2\theta)}{n(\pi^2+3)}\right).
\]
Let \(z_{1-\alpha/2}\) be the \((1-\alpha/2)\)-quantile of the standard normal. Then approximate \(100(1-\alpha)\%\) confidence intervals for \(\beta\) and \(\theta\) are
\[
\widehat\beta
\pm
z_{1-\alpha/2}\,
\sqrt{\frac{9\,\widehat\beta^2}{n(\pi^2+3)}}
\quad\text{and}\quad
\widehat\theta
\pm
z_{1-\alpha/2}\,
\sqrt{\frac{3\,\widehat\theta^2(\pi^2+3+3\log^2\widehat\theta)}{n(\pi^2+3)}},
\]
respectively, 
where in practice the unknown true parameters inside the variances are replaced by their MLEs. These intervals rely on the large-sample normal approximation and ignore the correlation between \(\widehat\beta\) and \(\widehat\theta\) when taken marginally.
\section{Bayesian Inference}\label{section-4}

In the Bayesian framework, inference is based on the posterior distribution, which combines the likelihood—containing the information provided by the data—with the prior distribution—reflecting any pre-existing knowledge about the parameters. For the Dhillon distribution with parameters \((\beta,\theta)\), the joint posterior density is given by
\begin{equation}\label{posterior-dhillon}
p(\beta,\theta \mid \boldsymbol{t})
\;=\;
\frac{\pi(\beta,\theta)\,\beta^n\,\theta^n}
     {d(\boldsymbol{t})}
\;\prod_{i=1}^n
t_i^{\beta-1}\,\bigl(1+\theta\,t_i^\beta\bigr)^{-2},
\end{equation}
where the normalizing constant is
\begin{equation}\label{normalizing-constant}
d(\boldsymbol{t})
=\int_{0}^{\infty}\!\!\int_{0}^{\infty}
\pi(\beta,\theta)\,\beta^n\,\theta^n
\;\prod_{i=1}^n
t_i^{\beta-1}\,\bigl(1+\theta\,t_i^\beta\bigr)^{-2}
\,d\beta\,d\theta.
\end{equation}

When no prior information about the parameters is available, it is common to adopt an objective prior for \(\pi(\beta,\theta)\). However, since objective priors are often improper, it is essential to verify that \(d(\boldsymbol{t}) < \infty\) to ensure the posterior is proper.

Following Ramos et al.\ \cite{ramos2023power}, the convergence of \eqref{normalizing-constant} depends on the tail behavior of \(\pi(\beta,\theta)\). Heavy-tailed priors in either \(\beta\) or \(\theta\) may lead to divergence, whereas sufficiently light-tailed priors guarantee propriety.

In what follows, we derive several objective priors for the parameters \(\beta\) and \(\theta\) of the Dhillon distribution using formal criteria, such as Jeffreys' prior, the MDIP, and reference priors. We also assess whether the resulting posterior distributions are proper under each approach.

\subsection{Jeffreys prior}

The Jeffreys prior \cite{jeffreys1946invariant} is a cornerstone of objective Bayesian analysis on account of its invariance under smooth reparametrizations and its minimal informational influence.  By construction it assigns to each parameter value a weight proportional to the square root of the local information content, ensuring that the prior measure transforms correctly if one changes variables.  Concretely, for a parameter vector \((\beta,\theta)\) the Jeffreys prior is defined as
\[
\pi_{J}(\beta,\theta)
\;\propto\;
\sqrt{\det I(\beta,\theta)},
\]
where \(I(\beta,\theta)\) is the Fisher information matrix per observation.

For the Dhillon distribution the  Jeffreys prior for \(\beta\) and \(\theta\) reduces to
\begin{equation}\label{eq-jeffreys-dhillon}
\pi_{J}(\beta,\theta)
\;\propto\;
\frac{1}{\beta\,\theta}.
\end{equation}

The joint posterior distribution for \(\beta\) and \(\theta\), obtained using the Jeffreys prior \eqref{eq-jeffreys-dhillon}, is proportional to the product of the likelihood \eqref{eq-8} and the prior, yielding
\begin{equation}\label{postj-dhillon}
\begin{aligned}
p_{J}(\beta,\theta\mid\boldsymbol{t})
&\propto \beta^{n-1}\,\theta^{\,n-1}\;\prod_{i=1}^{n}
t_{i}^{\beta}\,\bigl(1+\theta\,t_{i}^{\beta}\bigr)^{-2}\,. 
\end{aligned}
\end{equation}

\begin{proposition}\label{prop2} The posterior (\ref{postj-dhillon}) is proper for $n\geq 2$ as long as not all $t_i$ are equal.
\end{proposition}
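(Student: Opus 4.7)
The plan is to reduce the convergence of $d(\boldsymbol t)$ to an estimate that exposes the exponential decay produced by the spread of the data. Set $t_{(1)} = \max_i t_i$, let $m$ be the multiplicity of $t_{(1)}$, write $I_1 = \{i : t_i = t_{(1)}\}$ and $I_2 = \{i : t_i < t_{(1)}\}$, and put $GM = (\prod_i t_i)^{1/n}$ and $\bar t = (\prod_{i \in I_2} t_i)^{1/(n-m)}$. The assumption that not all $t_i$ are equal means $I_2 \ne \emptyset$ and $GM < t_{(1)}$ strictly. First I would substitute $\theta = y/t_{(1)}^\beta$ in the $\theta$-integral; the powers of $t_{(1)}$ coming from $\theta^{n-1}$ and the Jacobian combine with $\prod_i t_i^\beta$ to yield
$$d(\boldsymbol t) \;=\; \int_0^\infty \beta^{n-1}\left(\frac{GM}{t_{(1)}}\right)^{n\beta} \tilde K(\beta)\,d\beta,\qquad \tilde K(\beta) \;=\; \int_0^\infty y^{n-1}\prod_{i=1}^n (1+y\rho_i)^{-2}\,dy,$$
with $\rho_i = (t_i/t_{(1)})^\beta \in (0,1]$. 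The prefactor $(GM/t_{(1)})^{n\beta}$ decays exponentially in $\beta$, and this is the mechanism that will defeat any polynomial or mildly exponential growth of $\tilde K(\beta)$.

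Next I would control $\tilde K(\beta)$. The AM-GM-type inequality $\prod_{i\in I_2}(1+y\rho_i) \ge (1+y\bar\rho)^{n-m}$, where $\bar\rho = (\bar t/t_{(1)})^\beta \in (0,1)$, together with $\rho_i=1$ for $i\in I_1$, reduces the problem to bounding
$$\tilde K(\beta)\;\le\;\int_0^\infty y^{n-1}(1+y)^{-2m}(1+y\bar\rho)^{-2(n-m)}\,dy.$$
Substituting $y = z/\bar\rho$ and splitting the integration range into $(0,\bar\rho)$, $(\bar\rho,1)$ and $(1,\infty)$ yields upper bounds of the form $C\,\bar\rho^{2m-n}$ when $2m<n$, $C(-\log\bar\rho)$ when $2m=n$, and $C$ when $2m>n$, with $C$ depending only on the data. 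At the other end $\rho_i,\bar\rho\to 1$ as $\beta \to 0^+$, so by dominated convergence $\tilde K(\beta)\to B(n,n)<\infty$ and the integrand of $d(\boldsymbol t)$ is $O(\beta^{n-1})$, which is integrable on a neighbourhood of $0$ because $n\ge 2$.

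The decisive step, and what I expect to be the main obstacle, is the case $2m<n$, where $\tilde K(\beta)$ itself grows exponentially and the cancellation with the prefactor must be verified. Here I would use the algebraic identity $GM^n = t_{(1)}^m\,\bar t^{\,n-m}$ (immediate from the definitions of $GM$ and $\bar t$) to compute
$$\left(\frac{GM}{t_{(1)}}\right)^{n\beta}\bar\rho^{\,2m-n}\;=\;\left(\frac{\bar t}{t_{(1)}}\right)^{m\beta},$$
which still decays exponentially because $\bar t < t_{(1)}$. The cases $2m\ge n$ follow similarly but more easily. Putting everything together, $\beta^{n-1}(GM/t_{(1)})^{n\beta}\tilde K(\beta)$ is dominated by a polynomial times an exponentially decaying function, so it is integrable at infinity; combined with integrability near $0$ this gives $d(\boldsymbol t)<\infty$, and hence posterior propriety.
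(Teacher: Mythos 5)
Your argument is correct, but it takes a genuinely different route from the paper's. The paper's proof is more economical: it rewrites the integrand as $\beta^{n-1}\theta^{-1}\prod_i \theta t_i^{\beta}(1+\theta t_i^{\beta})^{-2}$, discards every factor except the two corresponding to $t_{\min}$ and $t_{\max}$ (each factor being bounded by $1$), evaluates the surviving $\theta$-integral $J(\beta)$ in closed form, and checks that $J(\beta)=O(\beta)$ as $\beta\to\infty$, so that $\beta^{n-1}r^{\beta}J(\beta)$ with $r=t_{\min}/t_{\max}<1$ is integrable. You instead keep all $n$ factors, normalize by the maximum via $\theta=y/t_{(1)}^{\beta}$, and control the kernel $\tilde K(\beta)$ with Mahler's inequality $\prod_{i\in I_2}(1+y\rho_i)\ge(1+y\bar\rho)^{\,n-m}$; the price is that $\tilde K(\beta)$ can itself grow exponentially when $2m<n$, which forces the cancellation identity $(GM/t_{(1)})^{n\beta}\,\bar\rho^{\,2m-n}=(\bar t/t_{(1)})^{m\beta}$ — that identity is correct (it follows from $GM^{n}=t_{(1)}^{m}\bar t^{\,n-m}$) and does rescue the argument, since $\bar t<t_{(1)}$ and $m\ge 1$. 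Both proofs rest on the same mechanism, namely that the spread of the data generates an exponentially decaying factor in $\beta$ that defeats polynomial growth; yours retains more of the likelihood and is closer to a sharp estimate of $d(\boldsymbol{t})$, while the paper's two-factor truncation is shorter and avoids your case analysis in the multiplicity $m$. Two cosmetic points: integrability of $\beta^{n-1}$ near $0$ actually holds for all $n\ge 1$ (the role of $n\ge 2$ is only to allow two distinct observations), and near $\beta=0$ you need a uniform bound on $\tilde K$ rather than just its limit — this is immediate because each $\rho_i$ is non-increasing in $\beta$, hence $\tilde K$ is non-decreasing and bounded on $(0,\beta_0]$ by $\tilde K(\beta_0)<\infty$, so the appeal to dominated convergence should be replaced by this monotonicity remark.
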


\begin{proof}
Let $r = t_k/t_j < 1$, where $t_k = \min_i t_i$ and $t_j = \max_i t_i$. By applying Fubini-Tonelli's Theorem and noting that $\theta t_i^{\beta}(1+\theta t_i^\beta)^{-2} < 1$, we obtain
\begin{equation}\label{eq-1}
\begin{aligned}
        d(\boldsymbol{t}) 
        &= \int_{0}^{\infty}\!\!\int_{0}^{\infty}
        \beta^{n-1}\,\theta^{-1}
        \prod_{i=1}^n
        \theta t_i^{\beta}\bigl(1+\theta\,t_i^\beta\bigr)^{-2}
        \;d\theta\,d\beta \\
        &\leq \int_{0}^{\infty}\!\!\int_{0}^{\infty}
        \beta^{n-1}\,\theta^{-1}
        \left[\theta t_k^{\beta}\bigl(1+\theta\,t_k^\beta\bigr)^{-2}
        \cdot \theta t_j^{\beta}\bigl(1+\theta\,t_j^\beta\bigr)^{-2}\right]
        \;d\theta\,d\beta \\
        &\leq \int_{0}^{\infty}
        \beta^{n-1} 
        \int_{0}^{\infty} \theta t_k^{\beta}  t_j^{\beta} 
        \bigl(1+\theta\,t_k^\beta\bigr)^{-2}
        \bigl(1+\theta\,t_j^\beta\bigr)^{-2}
        \;d\theta\,d\beta.
    \end{aligned}
\end{equation}

Now, perform the change of variables $x = \theta t_j^\beta$. From (\ref{eq-1}), we obtain
\begin{equation*}
\begin{aligned}
d(\boldsymbol{t}) 
\leq \int_{0}^{\infty}
\beta^{n-1} r^\beta 
\int_{0}^{\infty} \frac{x}
{(1+r^\beta x)^2(1+x)^2} 
\;dx \,d\beta 
= \int_{0}^{\infty}
\beta^{n-1} r^\beta 
J(\beta)
\,d\beta,
\end{aligned}
\end{equation*}
where
\begin{equation}\label{J-integral}
    \begin{aligned}
      J(\beta) = \int_{0}^{\infty} \frac{x}
{(1+r^\beta x)^2(1+x)^2} 
\;dx 
= \frac{\beta(r^\beta+1)\log r - 2(r^\beta - 1)}{(r^\beta - 1)^3}.
    \end{aligned}
\end{equation}

As $\beta \to 0$, we have $r^\beta \to 1$ and $J(\beta) \to 1/6$, hence $\beta^{n-1}r^\beta J(\beta) \propto \beta^{n-1}$, which is integrable near $0$ for $n \geq 2$.

As $\beta \to \infty$, we have $J(\beta) \propto \beta$, so $\beta^{n-1}r^\beta J(\beta) \propto \beta^n r^\beta$, which is integrable near $\infty$ since $r < 1$.

Therefore, we conclude that
\[
\int_{0}^{\infty}
\beta^{n-1} r^\beta 
J(\beta)
\,d\beta < \infty,
\]
which shows that \( d(\boldsymbol{t}) < \infty \).
\end{proof}
As a direct consequence of the proof of the previous proposition, we conclude that all posterior moments for the parameter \(\beta\) are finite (see Proposition~\ref{prop-moments-beta} in the appendix); however, the posterior moments for the parameter \(\theta\) do not always exist.  
In Remark~\ref{rmk-1} in the appendix, we observed that for \(n=2\) and observations satisfying \(t_1 < t_2 < 1\), the posterior mean of \(\theta\) fails to exist.  
The following corollary, which follows from Proposition~\ref{prop-moments-theta}, establishes necessary (although not sufficient) conditions for the existence of the posterior mean of \(\theta\).  
These conditions pertain to the sample and require the presence of observations both below and above 1, a requirement that is typically satisfied for generic samples of moderate size.

\begin{corollary}
The posterior means of \(\beta\) and \(\theta\), under the posterior distribution (\ref{postj-dhillon}), are finite for \(n > 2\), provided that there is at least one observation less than 1 and at least two observations greater than 1.
\end{corollary}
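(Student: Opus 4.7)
The plan is to establish finiteness of the two integrals
\[
d_\beta(\boldsymbol{t}) = \int_0^\infty\!\!\int_0^\infty \beta^n \theta^{n-1} \prod_{i=1}^n t_i^\beta(1+\theta t_i^\beta)^{-2}\, d\beta\, d\theta, \qquad
d_\theta(\boldsymbol{t}) = \int_0^\infty\!\!\int_0^\infty \beta^{n-1} \theta^{n} \prod_{i=1}^n t_i^\beta(1+\theta t_i^\beta)^{-2}\, d\beta\, d\theta,
\]
so that $E[\beta\mid\boldsymbol{t}] = d_\beta(\boldsymbol{t})/d(\boldsymbol{t})$ and $E[\theta\mid\boldsymbol{t}] = d_\theta(\boldsymbol{t})/d(\boldsymbol{t})$ are well defined (recall $d(\boldsymbol{t})<\infty$ by Proposition~\ref{prop2}). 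The overall strategy is, for each integral, to isolate a small number of the factors $f_i(\beta,\theta) := \theta t_i^\beta(1+\theta t_i^\beta)^{-2}$, bound the remaining ones by $1$, and then use different elementary tail bounds on the retained factors to produce a one-dimensional integral that can be handled explicitly.

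For $d_\beta(\boldsymbol{t})$, I would re-run the argument of Proposition~\ref{prop2} with one extra power of $\beta$. Rewriting $\theta^{n-1}\prod_i t_i^\beta(1+\theta t_i^\beta)^{-2} = \theta^{-1}\prod_i f_i$, keeping only $f_k$ and $f_j$ (for $t_k=\min_i t_i$ and $t_j=\max_i t_i$), and substituting $x=\theta t_j^\beta$ yields the one-dimensional bound $C\int_0^\infty \beta^n r^\beta J(\beta)\, d\beta$, with $r = t_k/t_j<1$ and $J$ as in~\eqref{J-integral}. Since $J(\beta)\to 1/6$ at $\beta=0$ and $J(\beta)\sim -\beta\log r$ at $\beta=\infty$, the integrand is $O(\beta^n)$ near zero (integrable for $n\geq 1$) and $O(\beta^{n+1} r^\beta)$ at infinity (integrable by exponential decay), so $d_\beta(\boldsymbol{t})<\infty$.

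For $d_\theta(\boldsymbol{t})$, the extra factor $\theta$ exactly cancels the $\theta^{-1}$ pulled out before, giving
\[
d_\theta(\boldsymbol{t}) = \int_0^\infty\!\!\int_0^\infty \beta^{n-1}\prod_{i=1}^n f_i(\beta,\theta)\, d\theta\, d\beta.
\]
The naive bound $\prod_i f_i\leq 1$ is useless here (it diverges in $\theta$), so the key idea is to combine the two elementary inequalities $f_i\leq \theta t_i^\beta$ (from $(1+\theta t_i^\beta)^2\geq 1$) and $f_i\leq 1/(\theta t_i^\beta)$ (from $(1+\theta t_i^\beta)^2\geq (\theta t_i^\beta)^2$) on different regions of $\theta$. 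Let $a$ index an observation with $t_a<1$ and let $b,c$ index two observations with $t_b,t_c>1$ (available by hypothesis). Splitting the $\theta$-axis at $1$,
\[
\int_0^\infty \prod_i f_i\, d\theta \;\leq\; \int_0^1 f_a\, d\theta + \int_1^\infty f_b f_c\, d\theta \;\leq\; \tfrac{1}{2}\,t_a^\beta + (t_b t_c)^{-\beta}.
\]
Multiplying by $\beta^{n-1}$ and integrating over $\beta$ gives two Gamma integrals, $\Gamma(n)/[2(-\log t_a)^n]$ and $\Gamma(n)/[\log(t_b t_c)]^n$, both finite since $t_a<1$ makes $-\log t_a>0$ and $t_b,t_c>1$ makes $\log(t_b t_c)>0$.

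The main obstacle I expect is precisely this asymmetric choice of bounds for $d_\theta$: unlike $d_\beta$, the $\theta^{-1}$ weight is no longer available to regularize the tail in $\theta$, so one cannot retain the two ``extreme'' factors symmetrically as in Proposition~\ref{prop2}. Instead, one must play the small-$\theta$ behavior ($f_i\approx \theta t_i^\beta$) against the large-$\theta$ behavior ($f_i\approx 1/(\theta t_i^\beta)$), which is exactly what forces the dichotomy $t_a<1$, $t_b,t_c>1$ to appear in the hypothesis and what selects $n>2$ as the minimal sample size for which the hypothesis can even hold.
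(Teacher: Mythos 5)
Your proof is correct, and while the treatment of $E[\beta\mid\boldsymbol{t}]$ coincides with the paper's (Proposition~\ref{prop-moments-beta} likewise reuses the bound of Proposition~\ref{prop2} with an extra power of $\beta$, noting that $J(\beta)$ grows polynomially), your handling of $E[\theta\mid\boldsymbol{t}]$ takes a genuinely different and leaner route. The paper (Proposition~\ref{prop-moments-theta} with $m=1$) splits \emph{both} axes at $1$, producing four regions $s_1,\dots,s_4$, each requiring its own choice of which factors to bound by $1$, which by $\theta t_i^\beta$, and which by $(\theta t_i^\beta)^{-1}$, and in the regions $s_2$ and $s_4$ the convergence in $\theta$ rests on counting exponents ($m<n$, respectively $k>m$). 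You instead split only the $\theta$-axis at $1$, integrate $\theta$ out first using $\int_0^1 f_a\,d\theta\le t_a^\beta/2$ and $\int_1^\infty f_bf_c\,d\theta\le (t_bt_c)^{-\beta}$, and reduce everything to two explicit Gamma integrals in $\beta$ over the whole half-line; the hypotheses $t_a<1$ and $t_b,t_c>1$ then enter transparently as the conditions $-\log t_a>0$ and $\log(t_bt_c)>0$ needed for exponential decay. The two arguments use the same elementary inequalities $u(1+u)^{-2}\le\min\{u,u^{-1},1\}$ and arrive at the same sample conditions, but yours avoids the four-way case analysis and makes visible exactly where each hypothesis is consumed; the paper's version has the advantage of being stated and proved for general $m$-th moments (requiring $m+1$ observations above $1$), from which the corollary is read off at $m=1$, whereas you would need to retain $m+1$ factors in the tail integral to recover that generality.
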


\subsection{Reference prior}

Another important class of noninformative priors is the reference prior, originally introduced by Berger and Bernardo \cite{berger1992development}. In this approach one first decides which parameter is of primary interest and which are treated as nuisance. The objective is to select a prior \(\pi(\Theta)\) that maximizes the expected information about the parameter(s) of interest provided by the data, relative to the information in the prior. A suitable measure of this expected information is
\[
I(\Theta)
=
\mathbb{E}_{\boldsymbol{x}}
\bigl[
K\bigl(\pi(\Theta\mid\boldsymbol{x}),\pi(\Theta)\bigr)
\bigr],
\]
where
\[
K\bigl(\pi(\Theta\mid\boldsymbol{x}),\pi(\Theta)\bigr)
=
\int
\pi(\Theta\mid\boldsymbol{x})
\ln\frac{\pi(\Theta\mid\boldsymbol{x})}{\pi(\Theta)}
\,d\Theta
\]
is the Kullback–Leibler divergence between posterior and prior. The reference prior is defined as the prior that maximizes \(I(\Theta)\), averaged over samples, following a prescribed ordering when nuisance parameters are present \cite{berger1992development}. Bernardo \cite{bernardo2005} showed that under asymptotic normality the ordered reference prior for the parameter of interest \(\psi\) with nuisance \(\lambda\) can be built from the Schur complements: if \(I_{\psi\psi\cdot\lambda}=I_{\psi\psi}-I_{\psi\lambda}^2/I_{\lambda\lambda}\), then the prior for \(\psi\) is proportional to \(\sqrt{I_{\psi\psi\cdot\lambda}}\sqrt{I_{\lambda\lambda}}\) .

Applying this to the Dhillon\((\beta,\theta)\) model, if \(\beta\) is the parameter of interest and \(\theta\) is nuisance, then
\[
I_{\beta\beta\cdot\theta}
= I_{\beta\beta} - \frac{I_{\beta\theta}^2}{I_{\theta\theta}}
= \frac{\pi^2+3}{9\,\beta^2},
\qquad
I_{\theta\theta}
= \frac{1}{3\,\theta^2},
\]
so that
\[
\pi_R(\beta,\theta)
\propto \sqrt{I_{\beta\beta\cdot\theta}}\;\sqrt{I_{\theta\theta}}
\propto \frac{1}{\beta\,\theta}.
\]
If instead \(\theta\) is of interest and \(\beta\) is nuisance, then
\[
I_{\theta\theta\cdot\beta}
= I_{\theta\theta} - \frac{I_{\beta\theta}^2}{I_{\beta\beta}}
= \frac{\pi^2+3}{3\,\theta^2(\pi^2+3+3\log^2\theta)},
\qquad
I_{\beta\beta}
= \frac{\pi^2+3+3\log^2\theta}{9\,\beta^2},
\]
and similarly
\[
\pi_R(\beta,\theta)
\propto \sqrt{I_{\theta\theta\cdot\beta}}\;\sqrt{I_{\beta\beta}}
\propto \frac{1}{\beta\,\theta}.
\]
Thus, under either ordering the ordered reference prior coincides:
\begin{equation}\label{priorref_updated}
\pi_R(\beta,\theta)\propto \frac{1}{\beta\,\theta}.
\end{equation}

This coincidence is not merely algebraic; it reflects the invariance structure of the Dhillon model's Fisher information, 
which results in identical functional forms for Jeffreys’ prior and the ordered reference prior, regardless of the parameter ordering. 
Consequently, subsequent Bayesian analyses under these priors yield equivalent posterior distributions.

The joint posterior distribution for \(\beta\) and \(\theta\), using \(\pi_R(\beta,\theta)\), is proportional to the product of the likelihood \eqref{eq-8} and the prior, which has the same form as presented in \eqref{postj-dhillon}.

\subsection{Maximal data information prior (MDIP)}

It is critical that the experimental data contributes additional information about the parameters beyond what is provided by the prior density; otherwise, performing the experiment would not be justified. Consequently, we are interested in a prior distribution that maximizes the incremental information extracted from the likelihood relative to the information already encoded in the prior. In line with this reasoning, Zellner \cite{zellner1} derived a family of priors by optimizing the average information present in the data density compared to the information inherent in the prior.

Zellner \cite{zellner2} employed Shannon entropy to build an information‐theoretic criterion for noninformative prior selection. Shannon defines entropy as
\begin{equation*}
H(\beta,\theta)
=\int_{0}^{\infty}f(t\mid\beta,\theta)\,\log f(t\mid\beta,\theta)\,dt,
\end{equation*}
which (up to sign) measures the information content of the density \(f(t\mid\beta,\theta)\). Accordingly, the MDIP functional is
\begin{equation*}
G[\pi(\beta,\theta)]
=\int_{0}^{\infty}H(\beta,\theta)\,\pi(\beta,\theta)\,d\beta\,d\theta
\;-\;
\int_{0}^{\infty}\pi(\beta,\theta)\,\log\pi(\beta,\theta)\,d\beta\,d\theta,
\end{equation*}
where the first term aggregates the model’s informational content under \(\pi\) and the second penalizes the prior’s own information. Imposing \(\int\pi(\beta,\theta)\,d\beta\,d\theta=1\) and maximizing \(G[\pi]\) yields
\begin{equation}\label{eq-23}
\pi(\beta,\theta)
\;=\;
k \,\exp\!\bigl\{H(\beta,\theta)\bigr\},
\qquad \beta>0,\ \theta>0,
\end{equation}
with \(k^{-1}=\int_{0}^{\infty}\int_{0}^{\infty}\exp\{H(\beta,\theta)\}\,d\beta\,d\theta\) ensuring normalization. A straightforward evaluation gives
\[
H(\beta,\theta)
=\ln\beta
\;+\;\frac{1}{\beta}\ln\theta
\;-\;2,
\]
so that from \eqref{eq-23} one obtains
\[
\pi_M(\beta,\theta)
\;\propto\;
\exp\!\Bigl(\ln\beta + \tfrac{1}{\beta}\ln\theta -2\Bigr)
\;\propto\;
\beta\,\theta^{1/\beta}.
\]

This specification accentuates the contribution of the data via the factor \(\theta^{1/\beta}\) while imposing a mild dependence on \(\beta\). Zellner \cite{zellner1990bayesian} discusses further properties of the MDIP, including its limited invariance, which we accept given the scarcity of prior information. In our analysis, we therefore adopt
\begin{equation}\label{zellerprior}
\pi_M(\beta,\theta)\propto\beta\,\theta^{1/\beta}    
\end{equation}
as the MDIP prior for the Dhillon distribution.

The joint posterior distribution for \(\beta\) and \(\theta\), obtained using the MDIP prior \eqref{zellerprior}, is proportional to the product of the likelihood \eqref{eq-8} and the prior, yielding
\begin{equation}\label{postj-dhillon}
\begin{aligned}
p_{M}(\beta,\theta\mid\boldsymbol{t})
&\propto \beta^{n+1}\,\theta^{\,n+1/\beta}\;\prod_{i=1}^{n}
t_{i}^{\beta}\,\bigl(1+\theta\,t_{i}^{\beta}\bigr)^{-2}\,. 
\end{aligned}
\end{equation}

\begin{proposition}\label{prop2} The posterior (\ref{postj-dhillon}) is improper for all $n\in\mathbb{N}$.
\end{proposition}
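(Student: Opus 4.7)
My plan is to exhibit the divergence of $d(\boldsymbol{t})$ by localizing to the tail region $\theta\to\infty$, where the extra factor $\theta^{1/\beta}$ coming from the MDIP prior overwhelms the quadratic decay of the likelihood whenever $\beta$ is sufficiently small. Since the integrand is nonnegative, I would invoke Tonelli's theorem and reduce the problem to estimating the inner integral in $\theta$ for each fixed $\beta>0$.

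The key technical step is a one-line lower bound. For $\theta\ge 1$ one has $1+\theta\,t_i^\beta\le \theta(1+t_i^\beta)$, hence $(1+\theta\,t_i^\beta)^{-2}\ge \theta^{-2}(1+t_i^\beta)^{-2}$, and after pulling the data-dependent positive constants aside,
\[
\int_1^\infty \theta^{\,n+1/\beta}\prod_{i=1}^n\bigl(1+\theta\,t_i^\beta\bigr)^{-2}\,d\theta
\;\ge\;
C(\beta,\boldsymbol{t})\int_1^\infty \theta^{\,1/\beta-n}\,d\theta,
\]
with $C(\beta,\boldsymbol{t})=\prod_{i=1}^n(1+t_i^\beta)^{-2}>0$. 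The reduced power integral diverges precisely when $1/\beta-n\ge -1$, which holds for every $\beta>0$ when $n=1$, and on the interval $(0,1/(n-1)]$ when $n\ge 2$. In either case the ``bad'' set of $\beta$ has positive Lebesgue measure.

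To close the argument, I would note that the remaining factor $\beta^{n+1}\prod_i t_i^{\beta-1}$ is strictly positive on this set, so integrating in $\beta$ against an a.e.~infinite inner integrand forces $d(\boldsymbol{t})=\infty$ by Tonelli, establishing impropriety of $p_M$ for every $n\in\mathbb{N}$.

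I do not expect a genuine obstacle. The only point requiring care is making the lower bound on the likelihood uniform enough to cover the small-$\beta$ regime, which is handled by the single inequality above. Conceptually, the proof diagnoses the pathology as follows: the MDIP factor $\theta^{1/\beta}$ explodes as $\beta\to 0$ faster than any finite-sample likelihood can damp at the upper tail of $\theta$, so the sample size cannot rescue the prior.
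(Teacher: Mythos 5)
Your proof is correct and takes essentially the same route as the paper: both apply Tonelli, localize to the region $\theta\ge 1$ with $\beta$ small, and show that the inner $\theta$-integral diverges on a set of $\beta$ of positive measure. The only cosmetic difference is your explicit bound $(1+\theta\,t_i^\beta)^{-2}\ge\theta^{-2}(1+t_i^\beta)^{-2}$, which reduces the inner integral to the divergent power integral $\int_1^\infty\theta^{1/\beta-n}\,d\theta$, whereas the paper restricts to $\beta\le 1/(n+1)$ so that $\theta^{n+1/\beta}\ge\theta^{2n+1}$ and notes that the resulting integrand tends to infinity as $\theta\to\infty$.
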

\begin{proof}
    By the Fubini–Tonelli Theorem, we have
    \begin{equation}
        \begin{aligned}\label{eq-19}
            d(\boldsymbol{t}) & = \int_0^{\infty} \int_0^{\infty} \beta^{n+1} \theta^{n+1/\beta} \prod_{i=1}^{n} t_{i}^{\beta}\,\bigl(1+\theta\,t_{i}^{\beta}\bigr)^{-2} \,d\theta\,d\beta \\
            & \geq \int_0^{1/(n+1)} \int_1^{\infty} \beta^{n+1} \theta^{n+1/\beta} \prod_{i=1}^{n}t_{i}^{\beta}\,\bigl(1+\theta\,t_{i}^{\beta}\bigr)^{-2} \,d\theta\,d\beta.
        \end{aligned}
    \end{equation}
    When $\beta \leq 1/(n+1)$ and $\theta \geq 1$, we have $\theta^{n+1/\beta} \geq \theta^{2n+1}$. Thus, from (\ref{eq-19}) it follows that
    \begin{equation}\label{eq-20}
        \begin{aligned}
            d(\boldsymbol{t}) &\geq \int_0^{1/(n+1)} \beta^{n+1} \left(\prod_{i=1}^n t_i\right)^{\beta} \int_1^{\infty} \theta^{2n+1} \prod_{i=1}^{n}
            \bigl(1+\theta\,t_{i}^{\beta}\bigr)^{-2} \,d\theta\,d\beta.
        \end{aligned}
    \end{equation}
    Define
    \begin{equation*}
        \begin{aligned}
            J(\beta) = \int_{1}^{\infty} \theta^{2n+1} \prod_{i=1}^n \left(1+\theta\,t_i^\beta\right)^{-2} \,d\theta.
        \end{aligned}
    \end{equation*}
    Since
    \begin{equation*}
        \begin{aligned}
            \lim_{\theta \to \infty} \theta^{2n+1} \prod_{i=1}^n \left(1+\theta\,t_i^\beta\right)^{-2} = \infty,
        \end{aligned}
    \end{equation*}
    it follows that $J(\beta) = \infty$. As all other functions in the integrand of (\ref{eq-20})  are positive, we conclude  that $d(\boldsymbol{t}) = \infty$.
\end{proof}

Intuitively, the MDIP’s heavy-tailed behavior in the $\theta$ dimension dominates the likelihood’s decay, 
leading to divergence of the posterior normalizing constant even for large sample sizes. This is a common limitation of entropy-maximizing priors in lifetime models with scale parameters.

\subsection{Sampling from the posterior}\label{sec:sampling}

To draw inference on \(\beta\) and \(\theta\) we implement a Metropolis–Hastings sampler targeting the joint posterior in \eqref{postj-dhillon}.  Since the marginal posteriors of \(\beta\) and \(\theta\) lack closed‐form normalizing constants, we alternate between proposals based on their full conditional kernels:
\begin{align}
\pi(\theta\mid\beta,\boldsymbol{t})
&\;\propto\;
\theta^{\,n-1}\,\prod_{i=1}^n\bigl(1+\theta\,t_i^\beta\bigr)^{-2},
\label{cond:theta}\\[6pt]
\pi(\beta\mid\theta,\boldsymbol{t})
&\;\propto\;
\beta^{\,n-1}\,\prod_{i=1}^n
t_i^{\,\beta-1}\,\bigl(1+\theta\,t_i^\beta\bigr)^{-2}.
\label{cond:beta}
\end{align}

To initialize our Metropolis–Hastings sampler we employ method‐of‐moments estimates \((\beta^{(0)},\theta^{(0)})\).  Denote
$
\bar t = \frac{1}{n}\sum_{i=1}^n t_i,
$, $m_2 = \frac{1}{n}\sum_{i=1}^n t_i^2,$
and set \(R = m_2/\bar t^2\).  The initial shape \(\beta^{(0)}\) is obtained by solving
\[
\frac{m_2}{\bar t^{2}}
\;=\;
\frac{\tan (\pi/\beta)}{\pi/\beta},
\]
numerically (e.g.\ via Newton–Raphson).  Once \(\beta^{(0)}\) is found, the scale \(\theta^{(0)}\) follows from
\[
\theta^{(0)}
=\Bigl(\bar t\;\frac{\beta^{(0)}\,\sin(\pi/\beta^{(0)})}{\pi}\Bigr)^{-\beta^{(0)}}.
\]

In practical applications any sample with at least two distinct observations satisfies \(m_2>\bar t^2\), i.e.\ \(R>1\), which guarantees a unique finite solution \(\beta^{(0)}>2\).  Degenerate datasets (all \(t_i\) equal) are the only case with \(R=1\), but such samples never arise in realistic reliability studies.  Hence this initialization procedure is always well‐defined and yields valid starting values for the MCMC algorithm.

Here, we choose positive‐support proposals \(q_\theta(\theta^*\mid\theta^{(j)})\) and \(q_\beta(\beta^*\mid\beta^{(j)})\).  A convenient option is to use Gamma random walks:
\[
\theta^*\sim\mathrm{Gamma}\bigl(a_\theta,\;a_\theta/\theta^{(j)}\bigr),
\quad
\beta^*\sim\mathrm{Gamma}\bigl(a_\beta,\;a_\beta/\beta^{(j)}\bigr),
\]
with shape hyperparameters \(a_\theta,a_\beta>0\).  One may also employ log‐normal increments or any other positive‐valued kernel.
The algorithm proceeds as follows:
\begin{enumerate}
  \item Initialize \(\beta^{(1)},\theta^{(1)}\) (e.g.\ via method of moments or MLE) and set \(j=1\).
  \item \emph{Update \(\theta\):}  
    \begin{itemize}
      \item Propose \(\theta^*\sim q_\theta(\cdot\mid\theta^{(j)})\) and compute  
      \[
      A_\theta
      =\min\Biggl\{1,\;
      \frac{\pi(\theta^*\mid\beta^{(j)},\boldsymbol{t})}{\pi(\theta^{(j)}\mid\beta^{(j)},\boldsymbol{t})}\;
      \frac{q_\theta(\theta^{(j)}\mid\theta^*)}{q_\theta(\theta^*\mid\theta^{(j)})}
      \Biggr\},
      \]
      where \(\pi(\theta\mid\beta,\boldsymbol{t})\) is given by \eqref{cond:theta}.
      \item Draw \(u\sim\mathrm{Uniform}(0,1)\).  If \(u\le A_\theta\), set \(\theta^{(j+1)}=\theta^*\); otherwise \(\theta^{(j+1)}=\theta^{(j)}\).
    \end{itemize}
  \item \emph{Update \(\beta\):}  
    \begin{itemize}
      \item Propose \(\beta^*\sim q_\beta(\cdot\mid\beta^{(j)})\) and compute  
      \[
      A_\beta
      =\min\Biggl\{1,\;
      \frac{\pi(\beta^*\mid\theta^{(j+1)},\boldsymbol{t})}{\pi(\beta^{(j)}\mid\theta^{(j+1)},\boldsymbol{t})}\;
      \frac{q_\beta(\beta^{(j)}\mid\beta^*)}{q_\beta(\beta^*\mid\beta^{(j)})}
      \Biggr\},
      \]
      where \(\pi(\beta\mid\theta,\boldsymbol{t})\) is given by \eqref{cond:beta}.
      \item Draw \(u\sim\mathrm{Uniform}(0,1)\).  If \(u\le A_\beta\), set \(\beta^{(j+1)}=\beta^*\); otherwise \(\beta^{(j+1)}=\beta^{(j)}\).
    \end{itemize}
  \item Increment \(j\) and repeat steps 2–3 until convergence diagnostics are satisfied.
\end{enumerate}

Alternative proposal distributions (e.g.\ log‐normal or adaptive schemes) may be used in place of the Gamma random walks, provided they have full support on \((0,\infty)\). It is worth mentioning here that, tuning the proposal scales \(a_\theta,a_\beta\) is crucial to achieve reasonable acceptance rates (e.g.\ 20–40 \%).

\begin{table}[!b]
\centering
\caption{Performance of estimators (MM, MLE and Bayes): Bias, MSE and coverage probability (CP) for various sample sizes $n$ (with $N=10{,}000$ simulations), $\beta_1=4$ y $\theta_1=2$ .}
\begin{tabular}{c|c|cc|ccc|ccc}
\hline
Parameter & $n$ 
  & \multicolumn{2}{c|}{\textbf{MM}} 
  & \multicolumn{3}{c|}{\textbf{MLE}} 
  & \multicolumn{3}{c}{\textbf{Bayes}} \\
\hline
 & 
  & Bias & MSE  
  & Bias & MSE & CP 
  & Bias & MSE & CP \\
\hline
\multirow{11}{*}{$\beta_1$} 
&  20  & 0.726 & 4.952 & 0.284 & 3.407 & 95.05  & 0.264 & 3.351 & 93.70 \\
 &  30  & 0.536 & 4.005 & 0.171 & 2.806 & 95.45  & 0.159 & 2.775 & 94.70 \\
 &  40  & 0.437 & 3.554 & 0.118 & 2.573 & 94.80  & 0.108 & 2.550 & 93.85 \\
 &  50  & 0.388 & 3.379 & 0.115 & 2.493 & 94.25  & 0.107 & 2.478 & 93.65 \\
 &  60  & 0.352 & 3.218 & 0.091 & 2.421 & 95.25  & 0.085 & 2.407 & 94.70 \\
 &  70  & 0.306 & 3.093 & 0.072 & 2.352 & 95.40  & 0.067 & 2.340 & 95.05 \\
 &  80  & 0.295 & 2.967 & 0.072 & 2.283 & 94.95  & 0.067 & 2.273 & 94.10 \\
 &  90  & 0.273 & 2.871 & 0.064 & 2.230 & 94.80  & 0.059 & 2.221 & 94.30 \\
 & 100  & 0.227 & 2.820 & 0.062 & 2.263 & 95.60  & 0.058 & 2.257 & 95.20 \\
 & 110  & 0.239 & 2.793 & 0.050 & 2.231 & 94.65  & 0.047 & 2.225 & 94.00 \\
 & 120  & 0.211 & 2.687 & 0.043 & 2.164 & 95.30  & 0.039 & 2.158 & 94.75 \\
\hline
\multirow{11}{*}{$\theta_1$} 
 &  20  & 0.493 & 3.584 & 0.335 & 2.943 & 92.85  & 0.331 & 2.946 & 94.25 \\
 &  30  & 0.283 & 2.610 & 0.185 & 2.390 & 93.55  & 0.181 & 2.393 & 94.05 \\
 &  40  & 0.193 & 2.367 & 0.118 & 2.298 & 93.70  & 0.116 & 2.299 & 94.70 \\
 &  50  & 0.160 & 2.143 & 0.109 & 2.091 & 94.30  & 0.108 & 2.096 & 94.70 \\
 &  60  & 0.126 & 2.139 & 0.083 & 2.130 & 93.55  & 0.082 & 2.133 & 94.95 \\
 &  70  & 0.101 & 2.114 & 0.062 & 2.092 & 94.20  & 0.060 & 2.094 & 94.95 \\
 &  80  & 0.114 & 2.073 & 0.074 & 2.059 & 95.05  & 0.073 & 2.059 & 94.85 \\
 &  90  & 0.093 & 2.026 & 0.061 & 2.038 & 95.10  & 0.060 & 2.039 & 94.80 \\
 & 100  & 0.089 & 2.044 & 0.067 & 2.031 & 94.10  & 0.066 & 2.033 & 93.60 \\
 & 110  & 0.075 & 2.033 & 0.043 & 2.049 & 94.20  & 0.043 & 2.051 & 93.85 \\
 & 120  & 0.089 & 1.979 & 0.066 & 1.986 & 95.05  & 0.065 & 1.990 & 94.45 \\
\hline
\end{tabular}
\label{tabperformance}
\end{table}

\begin{table}[!t]
\centering
\caption{Performance of estimators (MM, MLE and Bayes): Bias, MSE and coverage probability (CP) for various sample sizes $n$, $\beta=3$ and $\theta=0.5$.}
\begin{tabular}{c|c|cc|ccc|ccc}
\hline
Parameter & $n$ 
  & \multicolumn{2}{c|}{\textbf{MM}} 
  & \multicolumn{3}{c|}{\textbf{MLE}} 
  & \multicolumn{3}{c}{\textbf{Bayes}} \\
\hline
 & 
  & Bias & MSE  
  & Bias & MSE & CP 
  & Bias & MSE & CP \\
\hline
\multirow{11}{*}{$\beta_2$} 
 &  20  & 0.750 & 6.093 & 0.205 & 4.075 & 95.06  & 0.190 & 4.027 & 94.28 \\
 &  30  & 0.592 & 5.336 & 0.129 & 3.704 & 95.24  & 0.119 & 3.677 & 94.49 \\
 &  40  & 0.521 & 4.991 & 0.100 & 3.549 & 95.34  & 0.093 & 3.531 & 94.81 \\
 &  50  & 0.464 & 4.786 & 0.078 & 3.481 & 95.15  & 0.072 & 3.465 & 94.66 \\
 &  60  & 0.427 & 4.604 & 0.070 & 3.418 & 94.81  & 0.065 & 3.406 & 94.34 \\
 &  70  & 0.394 & 4.487 & 0.054 & 3.358 & 95.24  & 0.050 & 3.348 & 94.60 \\
 &  80  & 0.361 & 4.359 & 0.043 & 3.319 & 95.34  & 0.039 & 3.312 & 94.77 \\
 &  90  & 0.348 & 4.299 & 0.048 & 3.322 & 95.18  & 0.045 & 3.314 & 94.84 \\
 & 100  & 0.325 & 4.226 & 0.038 & 3.290 & 95.09  & 0.035 & 3.282 & 94.69 \\
 & 110  & 0.314 & 4.166 & 0.034 & 3.275 & 95.04  & 0.032 & 3.268 & 94.58 \\
 & 120  & 0.305 & 4.126 & 0.034 & 3.260 & 94.79  & 0.031 & 3.254 & 94.60 \\
\hline
\multirow{11}{*}{$\theta_2$} 
 &  20  & -0.083 & 3.388 & 0.022 & 3.125 & 90.62  & 0.023 & 3.123 & 93.90 \\
 &  30  & -0.082 & 3.356 & 0.014 & 3.114 & 92.22  & 0.015 & 3.112 & 94.51 \\
 &  40  & -0.079 & 3.347 & 0.011 & 3.119 & 93.10  & 0.011 & 3.118 & 94.77 \\
 &  50  & -0.074 & 3.333 & 0.009 & 3.118 & 93.11  & 0.010 & 3.117 & 94.26 \\
 &  60  & -0.073 & 3.329 & 0.006 & 3.127 & 93.46  & 0.006 & 3.127 & 94.40 \\
 &  70  & -0.071 & 3.321 & 0.005 & 3.127 & 93.42  & 0.005 & 3.126 & 94.25 \\
 &  80  & -0.065 & 3.303 & 0.008 & 3.118 & 94.02  & 0.008 & 3.118 & 94.20 \\
 &  90  & -0.064 & 3.301 & 0.004 & 3.125 & 94.19  & 0.004 & 3.125 & 94.50 \\
 & 100  & -0.062 & 3.293 & 0.004 & 3.123 & 94.38  & 0.005 & 3.123 & 94.66 \\
 & 110  & -0.061 & 3.292 & 0.003 & 3.125 & 94.43  & 0.003 & 3.124 & 94.90 \\
 & 120  & -0.059 & 3.287 & 0.003 & 3.128 & 94.24  & 0.003 & 3.127 & 94.73 \\
\hline
\end{tabular}
\label{tabperformance2}
\end{table}

\section{Simulation Study}

We assess the finite‐sample behavior of Bayesian inference under different objective priors for the Dhillon distribution via a Monte Carlo experiment.  For each sample size \(n\in\{20,30,\dots,120\}\), we generate \(N=10{,}000\) independent datasets \(\{t_{1}^{(i)},\dots,t_{n}^{(i)}\}\) from \(\mathrm{Dhillon}(\beta,\theta)\) with different values for the parameters.  In order to perform Bayesian inference, we applied a Metropolis–Hastings sampler to draw from the conditional posterior distributions of \(\beta\) and \(\theta\) (see Section~\ref{sec:sampling}).  For each simulated dataset, the algorithm was run for 5 500 iterations, of which the first 500 were discarded as burn‐in.  To mitigate autocorrelation among successive draws, we retained only one sample every 5 iterations, yielding \(M = 1000\) posterior realizations per replicate.  Convergence of each chain was assessed using Geweke’s diagnostic at the 95\% confidence level, and only chains satisfying this criterion were used for inference.  From the remaining draws $\{(\beta^{(j)},\theta^{(j)})\}_{j=1}^{M},$ we computed the posterior medians \(\hat\beta_i\) and \(\hat\theta_i\) for the \(i\)th replicate, as well as 95\% credibility intervals (CIs) for uncertainty assessment.  

To quantify estimator performance, we record for each parameter \(\phi\in\{\beta,\theta\}\) the empirical bias and mean squared error (MSE),
\[
\mathrm{Bias}_\phi
=\frac{1}{N}\sum_{i=1}^N\bigl(\hat\phi_i-\phi\bigr),
\qquad
\mathrm{MSE}_\phi
=\frac{1}{N}\sum_{i=1}^N\bigl(\hat\phi_i-\phi\bigr)^2.
\]
We also evaluate the coverage frequency of nominal 95\% Bayesian credible intervals and, for comparison, of the large‐sample confidence intervals based on the inverse Fisher information.  Ideally, good estimators exhibit small Bias and MSE and coverage rates close to 0.95. All posterior computations were carried out in \textsf{R}, and the simulation code is publicly available (see Code Availability).  To simulate draws from \(\mathrm{Dhillon}(\beta,\theta)\), we apply the inverse‐transform method: if \(U\sim\mathrm{Uniform}(0,1)\), then
\begin{equation}\label{eq-quantile-dhillon}
T
=\Bigl(\tfrac{U}{\theta\,(1-U)}\Bigr)^{1/\beta}
\end{equation}
has the \(\mathrm{Dhillon}(\beta,\theta)\) law.  This closed‐form quantile function ensures efficient generation of large samples.

Due to space constraints, we focus on three representative parameter settings—Scenario 1: \((\beta,\theta)=(4,2)\), Scenario 2: \((\beta,\theta)=(3,0.5)\), and Scenario 3: \((\beta,\theta)=(8,2)\).  Tables \ref{tabperformance}-\ref{tabperformance3} report the Bias and MSE for each prior choice as the sample size \(n\) varies under these three scenarios as well as the corresponding empirical 95\% coverage probabilities.

\begin{table}[!h]
\centering
\caption{Performance of estimators (MM, MLE and Bayes): Bias, MSE and coverage probability (CP) for various sample sizes $n$ (with $N=10{,}000$ simulations), $\beta=3$ and $\theta=4$.}
\begin{tabular}{c|c|cc|ccc|ccc}
\hline
Parameter & $n$ 
  & \multicolumn{2}{c|}{\textbf{MM}} 
  & \multicolumn{3}{c|}{\textbf{MLE}} 
  & \multicolumn{3}{c}{\textbf{Bayes}} \\
\hline
 & 
  & Bias & MSE  
  & Bias & MSE & CP 
  & Bias & MSE & CP \\
\hline
\multirow{11}{*}{$\beta_3$} 
 &  20  & 0.765 & 0.871 & 0.218 & 0.758 & 94.95  & 0.203 & 0.766 & 93.81 \\
 &  30  & 0.611 & 0.647 & 0.138 & 0.635 & 95.28  & 0.128 & 0.641 & 94.26 \\
 &  40  & 0.515 & 0.557 & 0.094 & 0.590 & 95.44  & 0.087 & 0.596 & 94.79 \\
 &  50  & 0.455 & 0.511 & 0.074 & 0.567 & 95.10  & 0.068 & 0.572 & 94.66 \\
 &  60  & 0.430 & 0.479 & 0.073 & 0.546 & 94.91  & 0.068 & 0.551 & 94.26 \\
 &  70  & 0.394 & 0.484 & 0.057 & 0.543 & 95.05  & 0.053 & 0.547 & 94.20 \\
 &  80  & 0.363 & 0.457 & 0.045 & 0.541 & 95.42  & 0.041 & 0.544 & 94.77 \\
 &  90  & 0.343 & 0.454 & 0.043 & 0.529 & 95.05  & 0.040 & 0.532 & 94.57 \\
 & 100  & 0.325 & 0.440 & 0.037 & 0.524 & 95.30  & 0.034 & 0.526 & 94.56 \\
 & 110  & 0.314 & 0.446 & 0.037 & 0.521 & 95.40  & 0.035 & 0.523 & 94.94 \\
 & 120  & 0.296 & 0.441 & 0.032 & 0.522 & 95.05  & 0.030 & 0.525 & 94.76 \\
\hline
\multirow{11}{*}{$\theta_3$} 
 &  20  & 2.403 &43.194 & 1.139 & 16.346 & 92.60  & 1.117 & 16.930 & 94.07 \\
 &  30  & 1.422 &15.242 & 0.632 & 6.161 & 93.59  & 0.616 & 6.124 & 94.40 \\
 &  40  & 1.050 & 9.183 & 0.444 & 3.961 & 94.26  & 0.434 & 3.955 & 94.55 \\
 &  50  & 0.798 & 6.083 & 0.325 & 2.861 & 94.24  & 0.317 & 2.840 & 94.61 \\
 &  60  & 0.719 & 5.113 & 0.289 & 2.472 & 94.51  & 0.282 & 2.452 & 94.82 \\
 &  70  & 0.606 & 4.136 & 0.228 & 2.069 & 94.63  & 0.222 & 2.057 & 94.47 \\
 &  80  & 0.551 & 3.633 & 0.209 & 1.867 & 94.60  & 0.205 & 1.865 & 94.54 \\
 &  90  & 0.514 & 3.154 & 0.200 & 1.663 & 95.15  & 0.196 & 1.660 & 94.98 \\
 & 100  & 0.457 & 2.846 & 0.166 & 1.533 & 94.88  & 0.163 & 1.531 & 94.63 \\
 & 110  & 0.418 & 2.543 & 0.150 & 1.398 & 95.33  & 0.148 & 1.399 & 94.97 \\
 & 120  & 0.384 & 2.330 & 0.134 & 1.321 & 94.87  & 0.132 & 1.319 & 94.49 \\
\hline
\end{tabular}
\label{tabperformance3}
\end{table}

Overall, the simulation results show a consistent advantage of the Bayesian estimator under Jeffreys/reference prior 
over the classical MLE, especially in small to moderate samples, with noticeable reductions in bias and MSE. 
The coverage probabilities for Bayesian credibility intervals remain close to the nominal 95\% across all scenarios, 
while the frequentist confidence intervals tend to undercover for the scale parameter $\theta$ in small samples. 
These findings confirm the robustness of the proposed objective Bayesian approach and its practical advantage 
in reliability contexts where data scarcity is common.

\ \

\section{Posterior Predictive Distribution}

To forecast a future lifetime \(T_{\rm new}\) under the Dhillon model, we start from the exact Bayesian identity
\[
p\bigl(T_{\rm new}\mid \boldsymbol{t}\bigr)
\;=\;
\iint
f\bigl(T_{\rm new};\beta,\theta\bigr)\;
p\bigl(\beta,\theta\mid \boldsymbol{t}\bigr)
\;d\beta\,d\theta,
\]
where \(f(\cdot;\beta,\theta)\) is the Dhillon density \eqref{eq-1} and \(p(\beta,\theta\mid\boldsymbol{t})\) is the joint posterior \eqref{postr-dhillon}.  Since this double integral has no closed‐form solution, we approximate it by simulation:

1. Obtain Monte Carlo samples \(\{(\beta^{(j)},\theta^{(j)})\}_{j=1}^N\) from \(p(\beta,\theta\mid\boldsymbol{t})\) via the algorithm of Section \ref{sec:sampling}.

2. For each draw \((\beta^{(j)},\theta^{(j)})\), generate a predictive replicate
\[
T_{\rm new}^{(j)}\;\sim\;\mathrm{Dhillon}\bigl(\beta^{(j)},\theta^{(j)}\bigr).
\]

3. Treat \(\{T_{\rm new}^{(j)}\}_{j=1}^N\) as an empirical sample from the posterior predictive distribution.

From these simulated values we may compute, for example, the predictive mean,
\[
\widehat{T}_{\rm pred}
=\frac{1}{N}\sum_{j=1}^N T_{\rm new}^{(j)},
\]
and a \((1-\gamma)\times100\%\) credible interval by taking the \(\gamma/2\) and \(1-\gamma/2\) quantiles of \(\{T_{\rm new}^{(j)}\}\).

Thus, this Monte Carlo approach naturally propagates uncertainty in \(\beta\) and \(\theta\) into predictions for future lifetimes under the Dhillon distribution.

\section{Reliability of Harvester Components}

In contemporary sugarcane harvesting operations, unplanned stoppages of the self-propelled harvester lead to significant productivity losses and sharply higher maintenance expenses. Over a typical season, each machine processes more than 20 tons of cane per hour under variable loads and highly abrasive field conditions, which accelerate the wear and fatigue of its mechanical subsystems. To illustrate the applicability of the three-parameter Dhillon distribution, we focus on two representative data sets drawn from field records: the diesel engine and the line divider.

The diesel engine dataset comprises 66 recorded times (in days) between successive corrective interventions, reflecting the engine’s variable torque demands and thermal stresses under load. The line divider dataset contains 90 time-to-failure observations, capturing fatigue of the feed-separator mechanism as it directs cane stalks into the processing units. Modeling these failure times accurately is critical for a predictive-maintenance strategy that optimizes uptime, limits repair costs, and extends component service life.

Table \ref{table:el_failure_times} lists the observed days until failure for both the Diesel Engine and the Line Divider. In the sequel, we fit and compare five three-parameter lifetime distributions—Weibull, Gamma, Exponential–Logarithmic (EL), Weighted Lindley (WL), and Exponential–Poisson (EP)—to each of these datasets. We then assess model fit using likelihood-based criteria and examine the Dhillon distribution’s flexibility in capturing the empirical hazard shapes implied by these two components.

\begin{table}[!h]
  \centering
  \caption{Failure times for Diesel Engine and Line Divider equipment (hours)}
  \begin{tabularx}{\textwidth}{@{}lX@{}}
    \toprule
    \textbf{Component} & \multicolumn{1}{c}{\textbf{Times to Failure}} \\
    \midrule
    Diesel Engine & 1, 1, 1, 1, 1, 1, 1, 1, 1, 1, 1, 1, 1, 1, 1, 1, 1, 2, 2, 2, 2, 2, 2, 2, 2, 3, 3, 3, 3, 3, 4, 4, 4, 4, 5, 5, 6, 7, 7, 7, 7, 8, 9, 11, 11, 11, 13, 14, 15, 15, 16, 18, 21, 21, 21, 22, 25, 26, 28, 32, 52, 59 \\
    Line Divider & 1, 1, 1, 1, 1, 1, 1, 1, 1, 1, 1, 1, 1, 1, 1, 1, 1, 1, 1, 1, 1, 1, 1, 2, 2, 2, 2, 2, 2, 2, 2, 3, 3, 3, 3, 3, 3, 4, 4, 4, 4, 4, 4, 5, 5, 5, 5, 5, 6, 6, 6, 6, 6, 6, 7, 7, 8, 8, 8, 8, 9, 11, 11, 11, 11, 11, 11, 11, 12, 14, 14, 15, 17, 17, 18, 19, 21, 24, 29, 31, 32, 34 \\
    \bottomrule
  \end{tabularx}
  \label{table:el_failure_times}
\end{table}

To evaluate model fit, we use the Bayesian Information Criterion (BIC), the Akaike Information Criterion (AIC), and its small-sample correction (AICc):
\[
\mathrm{BIC} = -2\ell(\hat{\theta};t) + k\log n,\quad
\mathrm{AIC} = -2\ell(\hat{\theta};t) + 2k,\quad
\mathrm{AICc} = \mathrm{AIC} + \frac{2k(k+1)}{n - k - 1},
\]
where \(n\) is the sample size, \(k=2\) the number of parameters per model, and \(\ell(\hat{\theta};t)\) the maximized log‑likelihood. The model with the lowest AIC is selected for its balance of fit quality and simplicity.

Table \ref{criteria_by_component} shows that the Dhillon model attains the lowest AIC and AICc for both Diesel Engine and Line Divider, outperforming the other fitted distributions. Figure \ref{graf-pic-a} confirms the quality of the fit: the Dhillon parametric survival curves lie very close to the empirical Kaplan–Meier steps, indicating that the model captures the observed failure-time behavior with minimal discrepancy. Together, these results support that Dhillon provides the best balance of fit and fidelity to the data among the candidates.

\begin{table}[!h]
  \centering
  \caption{BIC, AIC, and AICc values by fitted distribution and component}
  \renewcommand{\arraystretch}{1.3}   
  \setlength{\tabcolsep}{12pt}         
  \small
  \begin{tabular}{l  rrr  rrr}
    \hline
    Model    & \multicolumn{3}{c}{Diesel Engine} & \multicolumn{3}{c}{Line Divider} \\
             & BIC    & AIC    & AICc    & BIC             & AIC        & AICc          \\
    \hline
    Dhillon  & 396.11 & 391.86 & 392.06 & 488.32          & 483.51     & 483.66        \\
    Weibull  & 400.49 & 396.24 & 396.44 & 490.90          & 486.08     & 486.23        \\
    Gamma    & 402.31 & 398.06 & 398.26 & 491.08          & 486.26     & 486.41        \\
    EEG      & 396.34 & 392.09 & 392.29 & 489.25          & 484.43     & 484.59        \\
    WL       & 415.01 & 410.75 & 410.96 & 500.70          & 495.88     & 496.03        \\
    GE       & 402.59 & 398.33 & 398.54 & 491.05          & 486.23     & 486.38        \\
    EP       & 404.29 & 400.02 & 400.22 & 491.08          & 486.27     & 486.42        \\
    \hline
  \end{tabular}
  \label{criteria_by_component}
\end{table}

\begin{figure}[!htb]
  \centering
\includegraphics[scale=0.67]{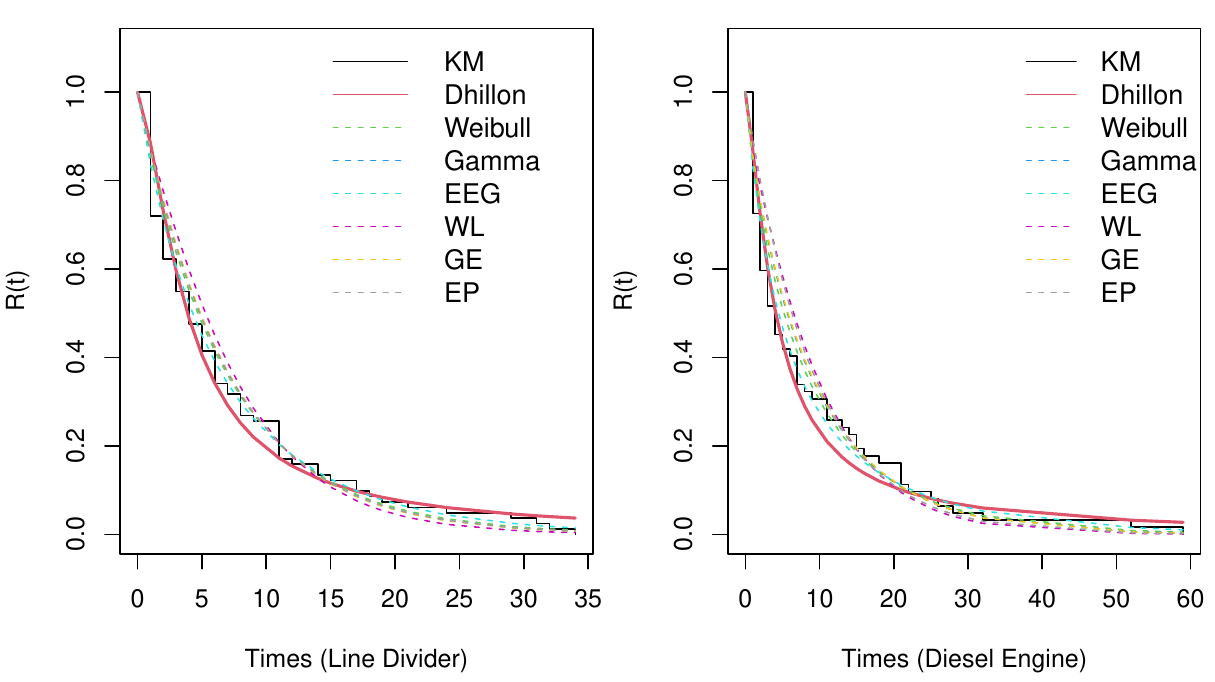}	
  \caption{Parametric survival curves overlaid on the empirical estimate for each component}
  \label{graf-pic-a}
\end{figure}

Table \ref{tab:Dhillon_sidebyside} presents the Bayesian posterior means, standard deviations, and 95\% credibility intervals for \(\beta\), \(\theta\) and the predictive \(y^*\) for both Diesel Engine and Line Divider.
\begin{table}[!h]
  \caption{Bayes estimator, standard deviation and 95\% credibility intervals for \(\mu\) (=\(\beta\)), \(\Omega\) (=\(\theta\)) and \(y^*\), comparando Diesel Engine e Line Divider}
  \centering
  {\small
  \begin{tabular}{crrrrrrrrr}
    \hline
    Parameter & \multicolumn{4}{c}{Diesel Engine} & & \multicolumn{4}{c}{Line Divider} \\
    \cline{2-5} \cline{7-10}
              & Bayes & SD & CI$_{2.5\%}$ & CI$_{97.5\%}$ & & Bayes & SD & CI$_{2.5\%}$ & CI$_{97.5\%}$ \\
    \hline
    \(\beta\) & 1.35 & 0.14 & 1.09 & 1.64 & & 1.52 & 0.14 & 1.25 & 1.79 \\
    \(\theta\) & 0.16 & 0.05 & 0.08 & 0.26 & & 0.13 & 0.04 & 0.07 & 0.21 \\
    \(T_{\rm pred}^*\) & 12.81 & 17.59 & 0.27 & 69.20 & & 9.54 & 12.04 & 0.32 & 47.53 \\
    \hline
  \end{tabular}
  \label{tab:Dhillon_sidebyside}
  }
\end{table}

The posterior predictive summaries \(T_{\rm pred}^*\)  quantify the expected future failure times under the Dhillon model, incorporating parameter uncertainty. The Diesel Engine shows a larger posterior mean predictive failure time (12.81) than the Line Divider (9.54), suggesting a tendency toward later failures for the former. However, both components exhibit very wide 95\% credibility intervals, indicating substantial uncertainty in individual predictions. This dispersion implies that, despite a central tendency toward moderate-to-long lifetimes, there is still a non-negligible probability of early failure. In practical terms, such uncertainty recommends a maintenance policy that combines routine monitoring with adaptive triggers, so that rare but early failures are not overlooked even if average predictions appear favorable.

\section{Conclusions}\label{sec:conclusions}

This paper develops a fully objective Bayesian treatment for the two-parameter Dhillon distribution, closing a long-standing gap between its practical appeal in reliability work and the absence of a rigorous Bayesian formulation. We derive the Fisher information, obtain Jeffreys' prior in closed form, and show that the ordered reference prior coincides with Jeffreys' prior, both taking the simple product form. We prove posterior propriety under mild, verifiable conditions (in particular, $n\ge2$ with at least two distinct observations), and we provide sufficient conditions for the existence of posterior moments, extending previous results on posterior propriety in complex survival models \cite{ramos2020posterior, ramos2023power, ramos2025posterior, achire2025posterior}. Importantly, we establish that Zellner's maximal data information prior \cite{zellner1, zellner2} produces an \emph{improper} posterior for all sample sizes in this model because of its heavy-tailed behavior in the scale dimension, a cautionary result for entropy-based priors in lifetime analysis. 

Computationally, we propose a simple Metropolis--Hastings scheme with positive-support proposals and a moment-based initialization that is easy to implement and empirically stable. This makes posterior sampling routine even in moderate samples and enables seamless posterior predictive calculations; the closed-form quantile function of the Dhillon model further simplifies predictive simulation and uncertainty quantification for future lifetimes.

A comprehensive Monte Carlo study shows that the objective Bayesian estimators under Jeffreys/reference prior dominate classical MLEs in small and moderate samples: biases and mean squared errors are consistently reduced, and nominal $95\%$ credibility is well tracked across scenarios featuring both decreasing and unimodal hazards. By contrast, large-sample normal intervals based on the inverse Fisher information tend to undercover the scale parameter in small $n$, an effect the Bayesian intervals mitigate \cite{berger1992reference, bernardo2005}.

Two real datasets from sugarcane-harvester components illustrate the practical gains. Against several competing parametric models, the Dhillon distribution achieves the best BIC/AIC in both cases and closely tracks the Kaplan--Meier curves, indicating superior fidelity to the observed failure mechanisms. Posterior predictive summaries reveal heavy right tails together with non-negligible early-failure mass—actionable information for condition-based maintenance: they support setting just-in-time intervention thresholds that hedge against early breakdowns without sacrificing uptime.

Methodologically, the proposed framework can be expanded to address challenges that frequently arise in applied reliability analysis. Incorporating censoring and truncation mechanisms, modelling covariate effects through proportional hazards or accelerated failure time structures \cite{achcar1992}, and accommodating long-term survivor fractions or competing risks \cite{perdona2011} would extend its reach to a wider range of field situations. In addition, hierarchical formulations could enable information sharing across related systems or operational sites, and robustness could be enhanced through heavy-tailed likelihoods or tempered posteriors. On the theoretical side, developing objective priors that retain posterior propriety under these more complex settings would strengthen the Bayesian foundations of the model. By combining these methodological advances with the inferential rigor established here, the Dhillon distribution can serve as a versatile and reliable tool for lifecycle modelling and predictive maintenance in demanding industrial environments \cite{he2023bayesian, vidovic2024properties, dong2024bayesian, shakhatreh2021objective}.

\section*{Code Availability}

The Metropolis-Hastings algorithm for sampling from the posterior distribution has been implemented in a package that is easy to use and available at: \\ \url{https://github.com/ramospedroluiz/dhillon}

\section*{Disclosure statement}

No potential conflict of interest was reported by the author(s)

\section*{Acknowledgements}

The authors are thankful to the Editorial Board and to the reviewers for their valuable comments and suggestions which led to this improved version. 

\bibliographystyle{tfs}

\appendix

\section{Derivation of Moments, Mean Residual Life function, and Fisher Information Matrix for the Dhillon Distribution}\label{apendix-A}

The $r$-th moment of a random variable $T \sim \mathrm{Dhillon}(\beta,\theta)$ is given by
\begin{equation}
    \begin{aligned}
        E(T^r) & = \int_0^\infty t^r f(t;\beta,\theta) \, dt  \\
        & = \int_0^{\infty} t^r \frac{\beta\theta \, t^{\beta-1}}{(1+\theta t^\beta)^2} \, dt 
        \quad \text{(substituting $f$ from equation (\ref{eq-0}))} \\
        & = \theta^{-r/\beta}  \int_0^{\infty} \frac{x^{r/\beta}}{(1+x)^2} \, dx 
        \quad  \text{(using the change of variable $x=\theta t^\beta$)} \\
        & = \theta^{-r/\beta}  \int_0^{1} u^{r/\beta} (1-u)^{-r/\beta} \, du 
        \quad  \text{(using the change of variable $u=\frac{x}{x+1}$)} \\
        & = \theta^{-r/\beta} B(1+r/\beta,1-r/\beta) 
        \quad \text{(by the definition of the Beta function)} \\
        & = \theta^{-r/\beta} \frac{\pi r/\beta}{\sin(\pi r /\beta)} 
        \quad \text{(by the reflection identity for the Beta function).}
    \end{aligned}
\end{equation}

The mean residual life function of $T \sim \mathrm{Dhillon}(\beta,\theta)$ is given by
\begin{equation}
    \begin{aligned}
        r(t) & = \frac{1}{R(t)} \int_t^{\infty} R(s) \, ds \\
        & = (1+\theta t^\beta)  \int_t^{\infty} \frac{1}{1+\theta s^\beta} \, ds 
        \quad \text{(substituting $R$ from equation (\ref{survival}))} \\
        & = \frac{\theta^{-1/\beta}(1+\theta t^\beta)}{\beta} 
        \int_{\theta t^\beta}^\infty x^{1/\beta-1} (1+x)^{-1} \, dx 
        \quad \text{(using the change of variable $x=\theta s^\beta$)} \\
        & = \frac{\theta^{-1/\beta}(1+\theta t^\beta)}{\beta} 
        \int_0^{\frac{1}{1+\theta t^\beta}} u^{-1/\beta} (1-u)^{1/\beta-1} \, du 
        \quad \text{(using the change of variable $u=\frac{1}{x+1}$)} \\
        & = \frac{\theta^{-1/\beta}(1+\theta t^\beta)}{\beta} 
        B\left(\frac{1}{1+\theta t^\beta}; 1-\frac{1}{\beta},\frac{1}{\beta}\right) 
        \quad \text{(where $B$ is the incomplete Beta function).}
    \end{aligned}
\end{equation}

The log-likelihood function for a single observation $t$ from $T \sim \mathrm{Dhillon}(\beta,\theta)$ is
\begin{equation}
    l(t;\beta,\theta) = \log \beta + \log \theta + (\beta-1)\log t - 2\log(1+\theta t^\beta).
\end{equation}

The first and second derivatives of $l(t;\beta,\theta)$ are
\begin{equation}
\begin{aligned}
    \frac{\partial l}{\partial \beta} &= \frac{1}{\beta}+\log t-\frac{2\theta t^\beta \log t}{1+\theta t^\beta}, \\
    \frac{\partial l}{\partial \theta} &= \frac{1}{\theta}-\frac{2 t^\beta}{1+\theta t^\beta},
\end{aligned} 
\end{equation}
and
\begin{equation}
\begin{aligned}
    \frac{\partial^2 l}{\partial \beta^2} &= -\frac{1}{\beta^2}-\frac{2\theta t^\beta \log^2 t}{(1+\theta t^\beta)^2}, \\
    \frac{\partial^2 l}{\partial \theta \partial \beta} &= -\frac{2 t^\beta \log t}{(1+\theta t^\beta)^2}, \\
    \frac{\partial^2 l}{\partial \theta^2} &= -\frac{1}{\theta^2}+\frac{2 t^{2\beta}}{(1+\theta t^\beta)^2}.
\end{aligned} 
\end{equation}

By straightforward integration, we obtain:
\begin{equation}
    I_{\beta\beta} = \frac{1}{\beta^2} + 2 E\left[ \frac{\theta T^\beta \log^2 T}{(1+\theta T^\beta)^2} \right]  
      = \frac{1}{9\beta^2} \left[ \pi^2 + 3 + 3\log^2\theta \right],
\end{equation}
\begin{equation}
    I_{\beta\theta} = 2 E\left[ \frac{T^\beta \log T}{(1+\theta T^\beta)^2} \right] 
          =  -\frac{\log\theta}{3\theta\beta},
\end{equation}
and
\begin{equation}
    I_{\theta\theta} = \frac{1}{\theta^2} - 2 E\left[ \frac{T^{2\beta}}{(1+\theta T^\beta)^2} \right] 
          = \frac{1}{3\theta^2}.
\end{equation}

\section{Existence of Posterior Moments}
\begin{proposition}\label{prop-moments-beta}
The posterior moments of $\beta$, under the posterior distribution (\ref{postj-dhillon}), are finite for $n \geq 2$, provided that not all $t_i$ are equal.
\end{proposition}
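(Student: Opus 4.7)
The plan is to adapt the argument used in Proposition~\ref{prop2} (the Jeffreys propriety result) by inserting an additional factor $\beta^k$ into the integrand and verifying that the resulting bound remains finite for every integer $k\ge 1$. Concretely, one writes
\[
E[\beta^k\mid\boldsymbol{t}]
\;=\;
\frac{1}{d(\boldsymbol{t})}\int_0^\infty\!\!\int_0^\infty
\beta^{\,n-1+k}\,\theta^{-1}\prod_{i=1}^{n}\theta\,t_i^\beta\bigl(1+\theta\,t_i^\beta\bigr)^{-2}\,d\theta\,d\beta ,
\]
and since $d(\boldsymbol{t})$ has already been shown to be finite and strictly positive in Proposition~\ref{prop2}, it suffices to bound the numerator.

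Next, I would reproduce verbatim the same two-step reduction used for $d(\boldsymbol{t})$: first apply $\theta\,t_i^\beta\bigl(1+\theta\,t_i^\beta\bigr)^{-2}\le 1$ for all indices other than $i=k$ (the minimum) and $i=j$ (the maximum), and then perform the substitution $x=\theta\,t_j^\beta$. This yields
\[
E[\beta^k\mid\boldsymbol{t}]\;\le\; C\int_0^\infty \beta^{\,n-1+k}\,r^\beta\,J(\beta)\,d\beta,
\]
where $r=t_k/t_j<1$ (available precisely because not all $t_i$ are equal), $C>0$ is a constant, and $J(\beta)$ is the integral defined in~(\ref{J-integral}) with known asymptotics $J(\beta)=O(1)$ as $\beta\to 0$ and $J(\beta)=O(\beta)$ as $\beta\to\infty$.

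Finally, I would check integrability at both endpoints. Near zero the integrand behaves like $\beta^{n-1+k}$, which is integrable whenever $n\ge 2$ and $k\ge 0$. Near infinity it is dominated by $\beta^{n+k}\,r^\beta$, and since $r<1$ the exponential decay crushes the polynomial factor for every $k$. Hence every posterior moment of $\beta$ is finite. The proof is therefore essentially a structural corollary of Proposition~\ref{prop2}: the exponential factor $r^\beta$ produced by the strict ordering $t_{\min}<t_{\max}$ provides the ``free'' polynomial room needed to absorb any power of $\beta$, so no genuinely new obstacle arises beyond those already handled for propriety. If anything is delicate, it is only making the constant $C$ (which depends mildly on $t_j$) uniform in the relevant regime, but this is a routine bookkeeping step rather than a conceptual difficulty.
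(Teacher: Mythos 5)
Your proposal is correct and follows essentially the same route as the paper: the authors likewise insert the extra power of $\beta$ into the integrand, reuse the reduction from Proposition~\ref{prop2} to obtain the bound $\int_0^\infty \beta^{\,n+m-1} r^\beta J(\beta)\,d\beta$ with $r<1$, and conclude from the polynomial growth of $J(\beta)$ against the exponential decay of $r^\beta$. The only cosmetic issue is your reuse of $k$ both for the moment order and for the index of the minimal observation, which should be disambiguated.
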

\begin{proof}
We aim to show the convergence of the following integral:
\[
E(\beta^{m} \mid \boldsymbol{t}) \propto \int_0^\infty \int_0^\infty
\beta^{n+m-1} \, \theta^{-1} \prod_{i=1}^n \theta t_i^{\beta} (1 + \theta t_i^\beta)^{-2} \, d\theta \, d\beta.
\]

Proceeding as in the proof of the proposition \ref{prop2}, we obtain the bound
\begin{equation}\label{moments-of-beta}
    E(\beta^{m} \mid \boldsymbol{t}) \leq \int_0^\infty \beta^{n+m-1} r^\beta J(\beta) \, d\beta,
\end{equation}
where \(J(\beta)\) is defined in equation~(\ref{J-integral}). Since \(J(\beta)\) grows at most polynomially, the integral on the right side of inequality~\eqref{moments-of-beta} converges. Therefore, we conclude that \(E(\beta^{m} \mid \boldsymbol{t})<\infty\).
\end{proof}

\begin{proposition}\label{prop-moments-theta}
The \(m\)-th posterior moment of \(\theta\) under the posterior distribution (\ref{postj-dhillon}) is finite 
provided that the sample satisfies the following conditions:
\begin{enumerate}
    \item[(I)] there is at least one observation less than \(1\),
    \item[(II)] there are at least \(m+1\) observations greater than \(1\).
\end{enumerate}
\end{proposition}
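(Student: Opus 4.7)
The plan extends the strategies used in Propositions \ref{prop2} and \ref{prop-moments-beta}: apply Fubini--Tonelli together with the uniform bound $\theta t_i^\beta(1+\theta t_i^\beta)^{-2}\leq 1$ to retain only $m+2$ carefully chosen factors, then change variables and verify that the outer $\beta$-integral converges. The novelty compared with Proposition \ref{prop2} is that, in order to absorb the extra $\theta^m$ inside the posterior, one must keep $m+2$ factors rather than only two.

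First, using $\theta^{n+m-1}\prod_i t_i^\beta=\theta^{m-1}\prod_i[\theta t_i^\beta]$ and Fubini--Tonelli we rewrite
$$E(\theta^m\mid\boldsymbol{t})\cdot d(\boldsymbol{t})=\int_0^\infty\!\!\int_0^\infty\beta^{n-1}\theta^{m-1}\prod_{i=1}^n\!\bigl[\theta t_i^\beta(1+\theta t_i^\beta)^{-2}\bigr]\,d\theta\,d\beta.$$
By condition (I) pick an index $k$ with $t_k<1$, and by condition (II) pick indices $j_1,\dots,j_{m+1}$ with $t_{j_l}>1$; set $t_{j_1}:=\max_l t_{j_l}$. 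Bounding the $n-(m+2)$ bracketed factors outside $\{k,j_1,\dots,j_{m+1}\}$ by $1$ and then applying the substitution $x=\theta\,t_{j_1}^\beta$ (exactly as in Proposition \ref{prop2}) would reduce the inner integral to $\sigma^\beta\,\tilde Q(\beta)$ with
$$\sigma=\frac{t_k\prod_{l=2}^{m+1}t_{j_l}}{t_{j_1}^{2m+1}},\qquad
\tilde Q(\beta)=\int_0^\infty\!\frac{x^{2m+1}\,dx}{(1+x)^2\,(1+x\,r_k^\beta)^2\,\prod_{l=2}^{m+1}(1+x\,\tilde r_l^\beta)^2},$$
where $r_k=t_k/t_{j_1}$ and $\tilde r_l=t_{j_l}/t_{j_1}$ all lie in $(0,1)$.

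The most delicate step, and the one I expect to be the main obstacle, is the asymptotic estimate of $\tilde Q(\beta)$ as $\beta\to\infty$. The idea is to sort the $m+1$ non-unit exponents $r_{[s]}$ in increasing order---note that $r_{[1]}=r_k$ is the smallest precisely because $t_k<1<t_{j_l}$---and to partition $[0,\infty)$ at the transition scales $x_s=r_{[s]}^{-\beta}$. A region-by-region analysis, exploiting that each factor $(1+xr^\beta)^{-2}$ behaves like $1$ below its transition and like $(xr^\beta)^{-2}$ above it, should identify the logarithmic slab in which exactly $m+1$ of the $m+2$ retained factors have already kicked in as the dominant region and yield
$$\tilde Q(\beta)\leq C\,\beta\,\prod_{l=2}^{m+1}\tilde r_l^{-2\beta}$$
for some constant $C$ depending only on the sample.

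The proof would then close with an algebraic identity in which both conditions enter:
$$\sigma\cdot\prod_{l=2}^{m+1}\tilde r_l^{-2}=\frac{t_k}{t_{j_1}\prod_{l=2}^{m+1}t_{j_l}}<1,$$
because $t_k<1$ (condition (I)) while $t_{j_1}$ and each $t_{j_l}$ exceed $1$ (condition (II)). Consequently $\sigma^\beta\tilde Q(\beta)\leq C\beta\,c^\beta$ with $c<1$, and $\int_0^\infty\beta^n c^\beta\,d\beta<\infty$, which proves $E(\theta^m\mid\boldsymbol{t})<\infty$.
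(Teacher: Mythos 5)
Your argument is correct in substance but follows a genuinely different route from the paper. The paper proves this proposition by splitting the $(\beta,\theta)$ domain into the four quadrants determined by $\beta\lessgtr 1$ and $\theta\lessgtr 1$ and applying, in each quadrant, one of two elementary per-factor bounds ($(1+\theta t_i^\beta)^{-2}<1$ or $(1+\theta t_i^\beta)^{-2}<\theta^{-2}t_i^{-2\beta}$, the choice depending on whether $t_i$ exceeds $1$); condition (I) is used only on the quadrant $\beta>1,\ \theta<1$ and condition (II) only where $\theta>1$. That route is entirely elementary and never requires an asymptotic analysis of a special one-dimensional integral. You instead generalize the paper's propriety proof directly: retain $m+2$ factors (one with $t_k<1$, which supplies the geometric decay $\sigma^\beta$ in $\beta$, and $m+1$ with $t_{j_l}>1$, which make the inner $\theta$-integral converge since the integrand decays like $\theta^{-3}$ at infinity), substitute $x=\theta t_{j_1}^\beta$, and estimate the resulting integral $\tilde Q(\beta)$. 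I checked your two key claims: the slab-by-slab analysis does give $\tilde Q(\beta)=O\bigl((1+\beta)\prod_{l\ge 2}\tilde r_l^{-2\beta}\bigr)$ (the dominant contribution being the logarithmic slab $x\in[(\min_l\tilde r_l)^{-\beta},\,r_k^{-\beta}]$ where the integrand is $x^{-1}\prod_l\tilde r_l^{-2\beta}$, and the sub-dominant slabs are controlled because the not-yet-active ratios all exceed the one at the slab's upper endpoint), and the cancellation $\sigma\prod_{l=2}^{m+1}\tilde r_l^{-2}=t_k/(t_{j_1}\prod_{l=2}^{m+1}t_{j_l})<1$ is exactly where both hypotheses enter. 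Your approach is a single unified estimate rather than four cases, and it makes transparent \emph{why} each hypothesis is needed; the paper's approach buys simplicity by avoiding the asymptotics of $\tilde Q$ altogether.

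Two points would need tightening before this is a complete proof. First, the bound on $\tilde Q(\beta)$ is the crux and is only sketched ("should identify", "would reduce"); the region-by-region estimate must actually be written out, including the verification that for $j\le m-1$ active factors the slab contribution $\rho_{j+1}^{-2\beta(m-j)}\prod_{s\le j}\rho_s^{-2\beta}$ is dominated by $\prod_{s\le m}\rho_s^{-2\beta}$ (which holds because $\rho_s\le\rho_{j+1}$ for $s>j$), and allowing for possible ties $\tilde r_l=1$ when several $t_{j_l}$ attain the maximum. Second, as stated the bound $C\beta\prod\tilde r_l^{-2\beta}$ vanishes at $\beta=0$ while $\tilde Q(0^+)=\int_0^\infty x^{2m+1}(1+x)^{-2(m+2)}dx>0$, so it can only be an estimate for large $\beta$; you must add that $\tilde Q$ is continuous, hence bounded on compacts, so the integrand is $O(\beta^{n-1})$ near $0$ and the full $\beta$-integral converges. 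Neither issue reflects a wrong idea—both are repairable along the lines you indicate.
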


\begin{proof}
Let \(t_1 < t_2 < \dots < t_n\) be observations from the \(Dhillon(\beta,\theta)\) distribution. Our goal is to show that the posterior moment given below is convergent:
\begin{equation*}
    \begin{aligned}
    E(\theta^m \mid \boldsymbol{t}) \propto \int_0^\infty  \int_0^\infty \beta^{n-1}\theta^{n+m-1} \prod_{i=1}^n t_i^{\beta-1}\left(1 + \theta t_i^\beta\right)^{-2}  \, d\beta \, d\theta.
    \end{aligned}
\end{equation*}
We decompose $E(\theta^m \mid \boldsymbol{t})\propto s_1+s_2+s_3+s_4$ where
    \begin{equation*}
        \begin{aligned}          
s_1 & =\int_{0}^{1}\!\!\int_{0}^{1}
\beta^{n-1}\,\theta^{n+m-1}
    \;\displaystyle\prod_{i=1}^n
\,t_i^{\beta-1}\bigl(1+\theta\,t_i^\beta\bigr)^{-2}
\;d\beta\,d\theta \\
s_2 & =\int_{1}^{\infty}\!\!\int_{0}^{1}
\beta^{n-1}\,\theta^{n+m-1}
    \;\displaystyle\prod_{i=1}^n
\,t_i^{\beta-1}\bigl(1+\theta\,t_i^\beta\bigr)^{-2}
\;d\beta\,d\theta \\
s_3 & =\int_{0}^{1}\!\!\int_{1}^{\infty}
\beta^{n-1}\,\theta^{n+m-1}
    \;\displaystyle\prod_{i=1}^n
\,t_i^{\beta-1}\bigl(1+\theta\,t_i^\beta\bigr)^{-2}
\;d\beta\,d\theta \\
s_4 & =\int_{1}^{\infty}\!\!\int_{1}^{\infty}
\beta^{n-1}\,\theta^{n+m-1}
    \;\displaystyle\prod_{i=1}^n
\,t_i^{\beta-1}\bigl(1+\theta\,t_i^\beta\bigr)^{-2}
\;d\beta\,d\theta \\
        \end{aligned}
    \end{equation*}

For $s_1$ we use that $(1+\theta t_i^{\beta})^{-2}<1$, thus
\begin{equation}\label{s1}
    \begin{aligned}
        s_1 & \leq \int_{0}^{1}\!\!\int_{0}^{1}
\beta^{n-1}\,\theta^{n+m-1}
    \;\displaystyle\prod_{i=1}^n
\,t_i^{\beta-1}
\;d\beta\,d\theta \\
 & = \int_{0}^{1}\!\!\int_{0}^{1}
\beta^{n-1}\,\theta^{n+m-1}
    \;\left(\prod_{i=1}^n
\,t_i\right)^{\beta-1}
\;d\beta\,d\theta \\
& = \int_{0}^{1}
 \beta^{n-1} \;\left(\prod_{i=1}^n
\,t_i\right)^{\beta-1}
\;d\beta \int_{0}^{1} \theta^{n+m-1}
    \,d\theta \\
 & \propto \int_{0}^{1}
  \beta^{n-1}
\;d\beta \int_{0}^{1} \theta^{n+m-1}
    \,d\theta \\
    \end{aligned}
\end{equation}
The univariate integrals in (\ref{s1}) converge for all $n>0$ and $m>0$, therefore $s_1<\infty$.

For $s_2$ we use that $(1+\theta t_i^{\beta})^{-2}<\theta^{-2} t_i^{-2\beta}$, thus
\begin{equation}\label{s2}
    \begin{aligned}
        s_2 & \leq \int_{1}^{\infty}\!\!\int_{0}^{1}
\beta^{n-1}\,\theta^{n+m-1}
    \;\displaystyle\prod_{i=1}^n
\, \theta^{-2} t_i^{-\beta-1}
\;d\beta\,d\theta \\
 & = \int_{1}^{\infty}\!\!\int_{0}^{1}
\beta^{n-1}\,\theta^{-n+m-1}
    \;\left(\prod_{i=1}^n
\,t_i\right)^{-\beta-1}
\;d\beta\,d\theta \\
& = \int_{0}^{1}
 \beta^{n-1} \;\left(\prod_{i=1}^n
\,t_i\right)^{-\beta-1}
\;d\beta \int_{1}^{\infty} \theta^{-n+m-1}
    \,d\theta \\
 & \propto \int_{0}^{1}
  \beta^{n-1}
\;d\beta \int_{1}^{\infty} \theta^{-n+m-1}
    \,d\theta \\
    \end{aligned}
\end{equation}
By hypothesis (II), $m<n$; hence, the univariate integrals in (\ref{s2}) converge, implying that $s_2<\infty$.

For $s_3$ we use that $(1+\theta t_1^{\beta})^{-2}<1$ and that  $\theta t_i^\beta (1+\theta t_i^{\beta})^{-2}<1$, thus
\begin{equation}\label{s3}
    \begin{aligned}
        s_3 &  =  \int_{0}^{1}\!\!\int_{1}^{\infty}
\beta^{n-1}\,\theta^{m-1}
    \;\displaystyle\prod_{i=1}^1
\,\theta t_i^{\beta-1}\bigl(1+\theta\,t_i^\beta\bigr)^{-2}\; \displaystyle\prod_{i=2}^n
\,\theta t_i^{\beta-1}\bigl(1+\theta\,t_i^\beta\bigr)^{-2}
\;d\beta\,d\theta \\
& \propto \int_{0}^{1}\!\!\int_{1}^{\infty}
\beta^{n-1}\,\theta^{m-1}
    \;\displaystyle\prod_{i=1}^1
\,\theta t_i^{\beta}\bigl(1+\theta\,t_i^\beta\bigr)^{-2}\; \displaystyle\prod_{i=2}^n
\,\theta t_i^{\beta}\bigl(1+\theta\,t_i^\beta\bigr)^{-2}
\;d\beta\,d\theta \\
& \leq \int_{0}^{1}\!\!\int_{1}^{\infty}
\beta^{n-1}\,\theta^{m}
\, t_1^{\beta}
\;d\beta\,d\theta \\
 & \lesssim \int_{1}^{\infty}
  \beta^{n-1} t_1^\beta
\,d\beta \int_{0}^{1} \theta^{m}
    \,d\theta \\
    \end{aligned}
\end{equation}
The last integral in (\ref{s3}) converges since, by hypothesis (I), we have \(t_1 < 1\), which ensures that \(s_3 < \infty\).

For $s_4$, we decompose the integrand into two factors.  
When $t_i>1$, we apply the bound $(1+\theta t_i^{\beta})^{-2} < \theta^{-2} t_i^{-2\beta}$,  
whereas for $t_i<1$, we use $\theta t_i^\beta (1+\theta t_i^{\beta})^{-2} < 1$.  
Therefore,

\begin{equation}\label{s4}
    \begin{aligned}
    s_4 &  =  \int_{1}^{\infty}\!\!\int_{1}^{\infty}
\beta^{n-1}\,\theta^{m-1}
    \;\displaystyle\prod_{t_i<1}
\,\theta t_i^{\beta-1}\bigl(1+\theta\,t_i^\beta\bigr)^{-2}\; \displaystyle\prod_{t_i>1}
\,\theta t_i^{\beta-1}\bigl(1+\theta\,t_i^\beta\bigr)^{-2}
\;d\beta\,d\theta \\
& \propto \int_{1}^{\infty}\!\!\int_{1}^{\infty}
\beta^{n-1}\,\theta^{m-1}
    \;\displaystyle\prod_{t_i<1}
\,\theta t_i^{\beta}\bigl(1+\theta\,t_i^\beta\bigr)^{-2}\; \displaystyle\prod_{t_i>1}
\,\theta t_i^{\beta}\bigl(1+\theta\,t_i^\beta\bigr)^{-2}
\;d\beta\,d\theta \\
    & \leq \int_{1}^{\infty}\!\!\int_{1}^{\infty}
\beta^{n-1}\,\theta^{m-1}
    \; \displaystyle\prod_{t_i>1}
\,\theta^{-1} t_i^{-\beta}
\;d\beta\,d\theta \\
 & = \int_{1}^{\infty}\!\!\int_{1}^{\infty}
\beta^{n-1}\,\theta^{-k+m-1}
    \;\left(\prod_{t_i>1}
\,t_i\right)^{-\beta}
\;d\beta\,d\theta \\
& = \int_{1}^{\infty}
\beta^{n-1} \;\left(\prod_{t_i>1}
\,t_i\right)^{-\beta}
\;d\beta \int_{1}^{\infty} \theta^{-k+m-1}
    \,d\theta, \\
    \end{aligned}
\end{equation}
where $k$ denotes the number of observations strictly less than $1$. 

By hypothesis (II), we have \(k > m\), which ensures that the integral with respect to \(\theta\) in (\ref{s4}) converges. 
Moreover, since the product \(\prod_{t_i > 1} t_i > 1\), the first univariate integral also converges. 
Therefore, we conclude that \(s_4 < \infty\).

Under conditions (I) and (II), we have shown that \(s_i < \infty\) for \(i=1,2,3,4\), and thus, we conclude that 
\[
E(\theta^m \mid \boldsymbol{t}) < \infty.
\]

\end{proof}
\begin{remark}\label{rmk-1}
The posterior moments for \(\theta\) may not always exist. For example, consider the case \(n=2\) with \(t_1 < t_2\). The posterior mean of \(\theta\) is given by
\begin{equation}
    \begin{aligned}
    E(\theta \mid \boldsymbol{t}) = \int_0^\infty \beta^{n-1} \int_0^\infty \prod_{i=1}^2 \frac{\theta t_i^\beta}{(1 + \theta t_i^\beta)^2} \, d\theta \, d\beta.
    \end{aligned}
\end{equation}

By applying the change of variable \(x = 1/\theta\) and setting \(a_i = t_i^\beta\), the integral can be rewritten as
\begin{equation}\label{eq-14}
    \begin{aligned}
    E(\theta \mid \boldsymbol{t}) 
    &= \int_0^\infty \beta^{n-1} \int_0^\infty \frac{a_1 a_2}{(x + a_1)^2 (x + a_2)^2} \, dx \, d\beta \\
    &= \int_0^\infty \beta^{n-1} \left( \frac{a_1^2 - a_2^2 + 2 a_1 a_2 \ln\left(\frac{a_2}{a_1}\right)}{(a_1 - a_2)^3} \right) d\beta \\
    &= \int_0^\infty \beta^{n-1} \left( \frac{\left(\frac{a_1}{a_2}\right)^2 - 1 - 2 \frac{a_1}{a_2} \ln\left(\frac{a_1}{a_2}\right)}{a_2 \left(\frac{a_1}{a_2} - 1\right)^3} \right) d\beta \\
    &= \int_0^\infty \beta^{n-1} t_2^{-\beta} \left( \frac{r^{2\beta} - 1 - 2 r^\beta \beta \ln r}{(r^\beta - 1)^3} \right) d\beta,
    \end{aligned}
\end{equation}
where \(r = \frac{t_1}{t_2} < 1\).

The ratio inside the integrand of \eqref{eq-14} is bounded. Therefore,
\[
E(\theta \mid \boldsymbol{t}) \propto \int_0^\infty \beta^{n-1} t_2^{-\beta} \, d\beta.
\]
If \(t_2 < 1\), the integral diverges and hence the posterior mean \(E(\theta \mid \boldsymbol{t})\) is infinite.

 \end{remark}

\end{document}